\newcolumntype{"}{@{\hskip\tabcolsep\vrule width 1pt\hskip\tabcolsep}}
\newcommand{\thickhline}{%
    \noalign {\ifnum 0=`}\fi \hrule height 1pt
    \futurelet \reserved@a \@xhline
    }
\newtheorem{property}{Property}[section]
\newtheorem{theorem}{Theorem}[section]
\newtheorem{lemma}[theorem]{Lemma}
\theoremstyle{definition}
\begin{document}

\title{Quantum dynamics of atoms in number-theory-inspired potentials}

\author{D. Cassettari}
\affiliation{SUPA School of Physics $\&$ Astronomy, University of St Andrews, North Haugh, St Andrews KY16 9SS, UK}

\author{O. V. Marchukov}
\affiliation{Institute of Photonics, Leibniz University Hannover, Nienburger St. 17, 30167 Hannover, Germany}
\author{B. Carruthers}
\affiliation{SUPA School of Physics $\&$ Astronomy, University of St Andrews, North Haugh, St Andrews KY16 9SS, UK}

\author{H. Kendell}
\affiliation{Quantum Engineering Centre for Doctoral Training, University of Bristol, Bristol BS8 1FD, UK}
\affiliation{Quantum Engineering Technology Laboratories, H. H. Wills Physics Laboratory and Department of Electrical and Electronic Engineering, University of Bristol, Bristol BS8 1FD, UK}

\author{J. Ruhl}
\affiliation{Department of Physics, University of Massachusetts Boston, Boston Massachusetts 02125, USA}

\author{B.~De~Mitchell~Pierre}
\affiliation{Department of Mathematics, University of Massachusetts Boston, Boston Massachusetts 02125, USA}

\author{C.~Zara}
\affiliation{Department of Mathematics, University of Massachusetts Boston, Boston Massachusetts 02125, USA}

\author{C. A. Weidner}
\affiliation{Quantum Engineering Technology Laboratories, H. H. Wills Physics Laboratory and Department of Electrical and Electronic Engineering, University of Bristol, Bristol BS8 1FD, UK}

\author{A. Trombettoni}
\affiliation{Department of Physics, University of Trieste, Strada
  Costiera 11, I-34151 Trieste, Italy}
\affiliation{SISSA and INFN, Sezione di Trieste, Via Bonomea 265, I-34136 Trieste, Italy}

\author{M. Olshanii}
\affiliation{Department of Physics, University of Massachusetts Boston, Boston Massachusetts 02125, USA}

\author{G. Mussardo}
\affiliation{SISSA and INFN, Sezione di Trieste, Via Bonomea 265, I-34136 
Trieste, Italy}

\date{\today}

\begin{abstract}

In this paper we study transitions of atoms between energy levels of several number-theory-inspired trapping potentials under the effect of time-dependent perturbations. First, we simulate in detail the case of a trap whose single-particle spectrum is given by the prime numbers. We investigate one-body Rabi oscillations and the excitation lineshape for two resonantly coupled energy levels, and we show that quantum control is a faster method for state preparation than periodic perturbation. Next, we investigate cascades of such transitions, particularly whether one can construct a quantum system where the existence of a continuous resonant cascade from a given initial energy eigenstate is predicated by the validity of a given statement in number theory. We find that such resonance cascades, in a suitably-designed one-body system, can be used to verify that the sequence of natural numbers is closed under multiplication. We further present ideas for two more resonance cascade experiments designed to illustrate the validity of the Diophantus-Brahmagupta-Fibonacci identity and the validity of the Goldbach conjecture. 
\end{abstract}

\maketitle

\section{Introduction}

In recent years, optical trapping of ultracold atoms has advanced to the point that it is now possible to control, with precision, the profile of the trapping potential. This progress has been enabled by light sculpting techniques~\cite{yelin2021} relying on the use of spatial light modulators~\cite{cassettari2014}, digital micromirror devices~\cite{greiner2016, gross2016}, and fast-scanning acousto-optic deflectors~\cite{boshier2009}. 

These developments are particularly suited to address, in a realistic and tunable setup, suggestions and ideas at the 
interface between quantum physics and number theory~\cite{GMscattering,Hutchinson2011,Wolf2020}. To this end there has recently been experimental progress in creating \emph{prime number potentials} for cold atoms~\cite{cassettari2022_220203446}. These are trapping potentials whose first $N_{\text{b}}$ eigenenergies are proportional to the first $N_{\text{b}}$ prime numbers. The rationale is that such highly controllable potentials, which can in general be engineered to have an assigned sequence of numbers as their energy levels, are relevant for a host of theoretical ideas. For instance, they can be used for the implementation of primality tests~\cite{GMscattering}, prime factorization of integers (see Refs. \cite{gleisberg2018_035009,mussardo2023} and references therein), and for studying statements from number theory such as the Goldbach conjecture, which states that every even natural number greater than two can be written as sum of two prime numbers. Of course, one cannot prove mathematical theorems by using a physical device in which a finite (even very large) number of prime numbers is encoded -- but, on the other hand, one can systematically extract predictions from number theory theorems and conjectures and design experiments based on them (possibly looking for counterexamples for conjectures). At the same time, by cross-fertilization, the physical realization of number-theory-inspired setups can generate new ideas to tackle number theory problems.

In this paper, we consider one-dimensional trapping potentials where the energy levels of the system have been engineered to produce a desired spectrum, and we study transitions between energy levels under the effect of time-dependent perturbations. We consider two complementary scenarios. First, in Sec. \ref{parametricresonance} we study transitions in a potential with energy levels proportional to the first $N_{\text{b}}$ prime numbers. As the prime numbers are not regularly spaced, a time-dependent perturbation on resonance between the ground state and a target excited state transfers population between these states, while the population of other states remains negligible. We also explore how quantum control protocols can speed up this state preparation. We remind that the selective excitation of just one state is a critical state preparation step prior to carrying out factorization algorithms described in Refs.~\cite{mussardo2023,gleisberg2015}. 

The next sections of the paper are dedicated to further exploration of the use of spectrally-engineered systems. In Secs. \ref{cascades} and \ref{s:natural_set} we introduce \emph{resonance cascades}, where a particle prepared in a given highly excited state is induced to cascade down in energy due to an external perturbation. We use these resonance cascades as a paradigm to study a set of statements from number theory. As a test bench, we consider a potential with an energy spectrum given by the logarithms of the natural numbers. Here we find that, because the sequence of natural numbers is closed under multiplication, a perturbation at an appropriate frequency can generate a resonance cascade (or ladder), in which many levels are populated. Conversely, if we remove from the spectrum the logarithm of one of the natural numbers, this resonance cascade is inhibited.

Finally, in Sec. \ref{futureideas} we outline future ideas on how resonance cascades can be used to study the Diophantus-Brahmagupta-Fibonacci identity~\cite{dudley_Number_Theory1970} (Lemma 1, p.\ 142) and the Goldbach conjecture~\cite{dudley_Number_Theory1970} (p.\ 147). In the first case, the potential of interest has energy levels given by the logarithm of the sum of two integers squared, whereas the second case is based on the prime number potential investigated in Section \ref{parametricresonance}, but where a cascade is enabled if the potential contains two weakly-interacting particles. Thus, this work outlines and studies in detail the necessary ingredients for multiple experimental verifications of number theoretic statements, from the spectral engineering to the initial state preparation and driving required to demonstrate, via resonance cascades, the identities and conjectures under study.

\section{A resonance in the prime number potential \label{parametricresonance}}

In this section, we discuss the relevant state preparation required for the subsequent number theoretic experiments. We explore this in two ways: first through perturbative driving on resonance, where we recover behavior similar to Rabi oscillations between the interrogated states. Secondly, we explore the potential of quantum control to speed up state preparation, showing that the state transfer can be sped up by about a factor of five.

\subsection{Periodic perturbations}

Let us consider the one-dimensional Hamiltonian:
\begin{equation}
\hat{H}^{\text{P}}= \frac{\hat{p}^2}{2m} + U^{\text{P}}(x)
\,\,,
\label{1d_Sch}
\end{equation}
where $U^{\text{P}}(x)$ is a potential having as assigned spectrum the first $N_{\text{b}}$ prime numbers times an energy scale, denoted by $U_0$, depending on the physical realization of the system. To be more precise, we could use the notation $U_{N_\text{b}}^{\text{P}}(x)$, but for the sake of simplicity we will omit the index $N_\text{b}$. The potential $U^{\text{P}}(x)$ is referred to as the {\it prime number potential}.

Fig. \ref{primePot} shows $U^{\text{P}}(x)$ with eigenvalues given by the first $N_{\text{b}}=20$ prime numbers:
\begin{align}
\begin{split}
&
E^{\text{P}}_{n}  = U_{0} p_{n} 
\\
&
p_{1},\,  p_{2},\, p_{3},\, \ldots p_{20} = 2,\, 3,\, 5,\, 7,\,\ldots 71.
\end{split}
\end{align}
This potential has been calculated using the methods of supersymmetric quantum mechanics and has been optically implemented with a spatial light modulator \cite{cassettari2022_220203446}. In general, the methods used to calculate this potential can be used to calculate potentials with arbitrary (and finite) spectra. Notice that $U^{\text{P}}(x)$ is going to a constant as $x \to \infty$, so that above the eigenvalue $p_{20}$ the continuous part of the spectrum starts. From now on, we measure energies in units of the energy scale $U_0$ and lengths in units of the quantity $a=\hbar/\sqrt{m U_0}$.

%

\begin{figure}[t]
\begin{center}
\includegraphics[width=0.9\columnwidth]{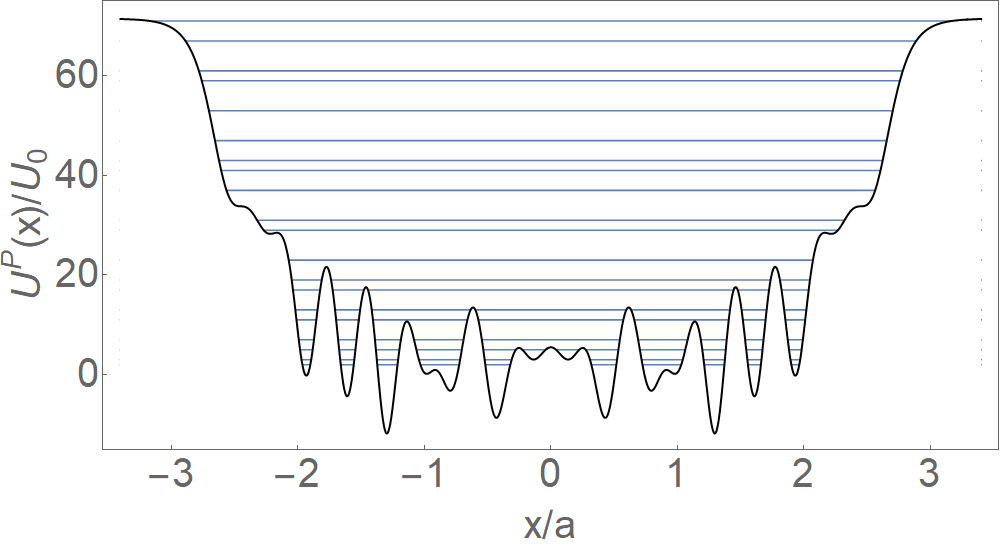}
\caption{Prime number potential with 20 energy levels corresponding to the first 20 prime numbers. Here, $a=\hbar/\sqrt{mU_{0}}$ is the characteristic length scale of the potential (with $m$ the particle mass), and $U_0$ is the corresponding energy scale.}
\label{primePot}
\end{center}
\end{figure}

\begin{figure}[b]
\begin{center}
\includegraphics[width=1\columnwidth]{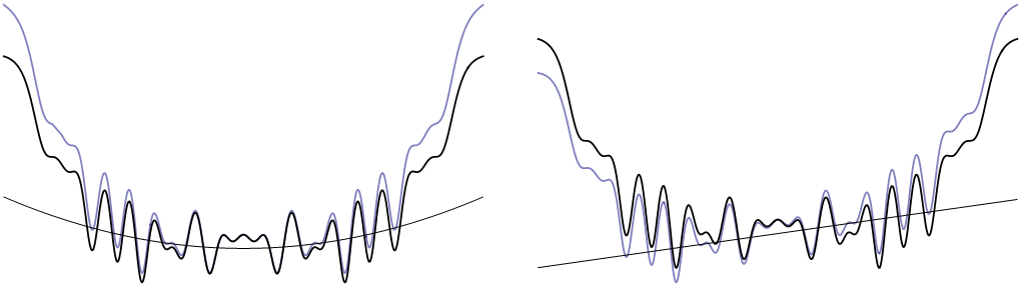}
\caption{Even (left) and odd (right) perturbations, where $\beta=2$ for the even case and $\beta=4$ for the odd case. These values of $\beta$ are significantly larger than those used in the simulations to better visualize the effect of the perturbations on the potential. The axes on these plots are the same as in Fig.~\ref{primePot}.}
\label{even_odd_pert}
\end{center}
\end{figure}

\begin{figure}[t]
    \subfloat[]{\includegraphics[width=0.9\columnwidth]{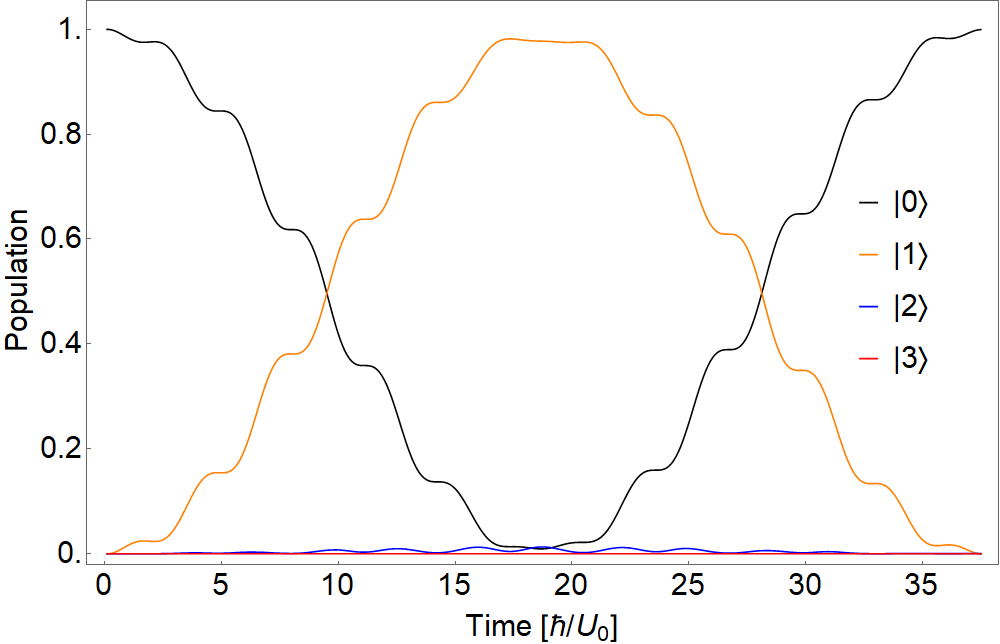}}
    \newline
    \subfloat[]{\includegraphics[width=0.9\columnwidth]{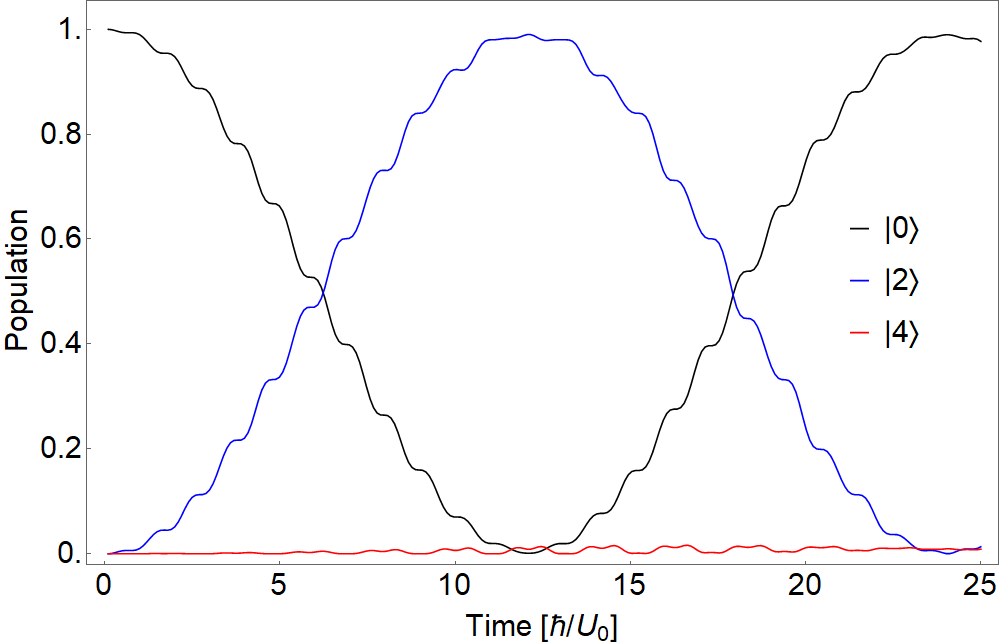}}
    \newline
    \caption{(a) Odd transition between the ground state and first excited state with $\beta=0.25$, where a small population of the second excited state can be seen. (b) Even transition between the ground state and second excited state with $\beta=0.5$. This driving strength also leads to a small but visible population of the fourth excited state (the next transition of even parity).}
\    \label{Rabi_osc}
\end{figure}

In this section we simulate the transfer between bound states of this prime number potential by applying time-dependent perturbations of the form
\begin{equation}
V(x,t)=f(x) \cos(\omega t)
\end{equation}
where we examine two cases for $f(x)$ (see Fig. \ref{even_odd_pert}): $f_\mathrm{odd}(x)=\beta (x/a)$ and $f_\mathrm{even}(x)=\beta (x/a)^2$, which induce transitions between states of opposite parity and between states of the same parity, respectively. Specifically, $f_\mathrm{odd}(x)$ enables the transfer between the ground state $|0\rangle$ and the first-excited state $|1\rangle$, while $f_\mathrm{even}(x)$ enables the transfer between $|0\rangle$ and the second-excited state $|2\rangle$. In both cases, the perturbation is close to resonance with the transition ($\omega=U_0/\hbar$ and $\omega=3U_0/\hbar$ for the first and second excited states, respectively), and the anharmonicity of the potential ensures that the population of higher states remains small, so that we have essentially isolated a two-level system. Experimentally, these perturbations can be realized with additional optical or magnetic fields. Fig. \ref{Rabi_osc} shows Rabi oscillations between $|0\rangle$ and $|1\rangle$ (a) and between $|0\rangle$ and the $|2\rangle$ (b), both obtained by numerical integration of the time-dependent Schr\"odinger equation. The Rabi frequency $\Omega_\mathrm{R}$ is given by the matrix element of the perturbation between the ground state and the desired excited state \cite{cjfoot}. The $\beta$ values are chosen according to the following criterion: a larger $\beta$ would lead to a higher Rabi frequency and a faster transfer to the desired excited state, which is experimentally advantageous, however this would come at the price of reduced efficiency of the transfer to the desired excited state, because a larger $\beta$ would increase the population of higher states. In this respect, Fig. \ref{Rabi_osc} shows the fastest possible transfer while achieving a good fidelity corresponding to $>98\%$ of the population in the desired excited state. Note that in addition to the Rabi oscillations, the time evolution also displays fast ripples at frequency $2\omega$. These are due to the fact that the rotating wave approximation is not perfectly satisfied for the perturbation strengths we are using.

\begin{figure}[t]
    \subfloat[]{\includegraphics[width=0.9\columnwidth]{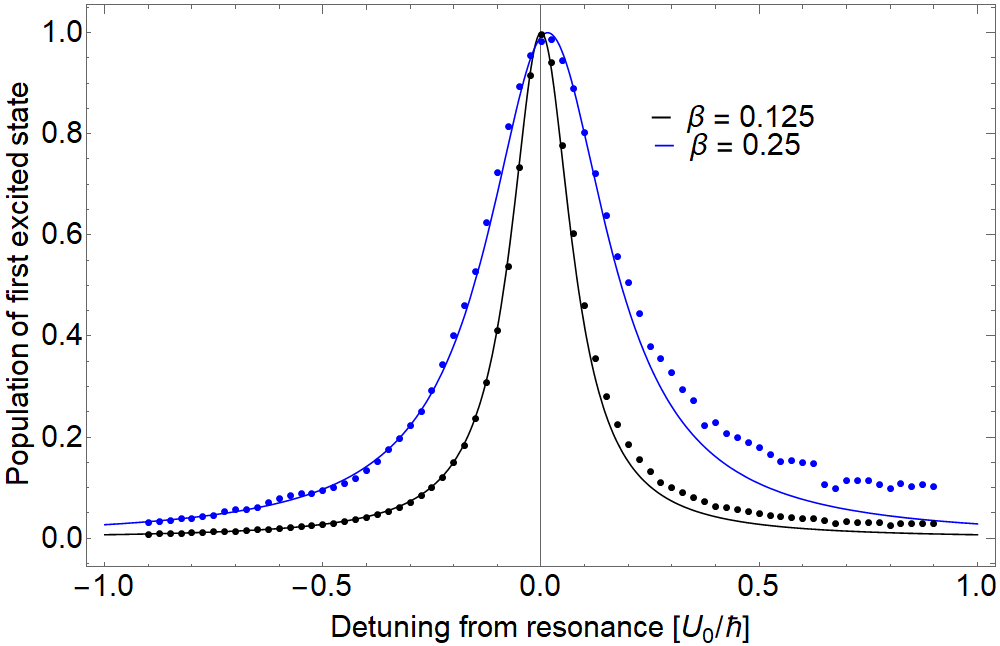}}
    \newline
    \subfloat[]{\includegraphics[width=0.9\columnwidth]{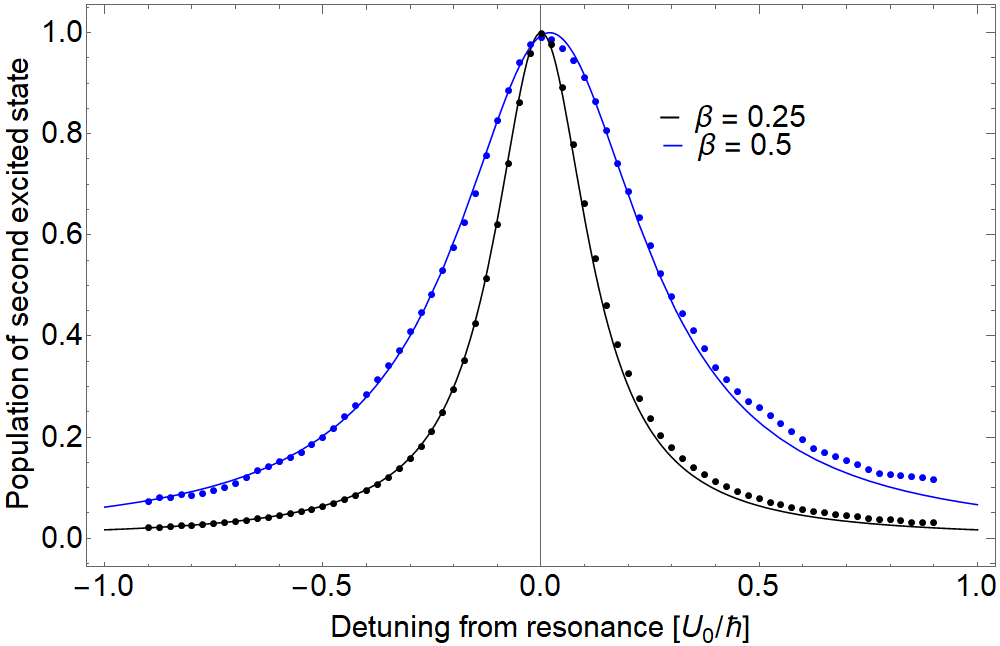}}
    \newline
    \caption{(a) Lineshape of the odd transition for $\beta$=0.125 and $\beta$=0.25. (b) Lineshape of the even transition for $\beta$=0.25 and $\beta$=0.5. Numerical data are displayed alongside Eq.~\eqref{Lorentzian} with their respective value of $\Omega_R$. We note that for the same value $\beta$=0.25, the width of the odd transition is larger than the width of the even transition. This is due to the different couplings, as $\langle 0|x|1\rangle=0.68$ and $\langle 0|x^2|2\rangle=0.52$.}
\    \label{detunings}
\end{figure}

To further characterize the system, we study the effect of off-resonant perturbations. For a two-level system in the rotating wave approximation, the maximum population of the excited state is
\begin{equation}
    P_{ex, max}=\frac{\Omega_R^2}{\Omega_R^2+\delta^2}\text{ .}
\label{Lorentzian}
\end{equation}
where $\delta$ is the detuning from resonance~\cite{cjfoot}. The maximum population is 1 for resonant perturbation ($\delta=0$), and displays a Lorentzian lineshape for finite detunings with FWHM=$2\Omega_R$. These Lorentzian lineshapes are found numerically in our system for both the odd and the even transition, as shown in Fig. \ref{detunings}. The good agreement between the numerical data and Eq. \eqref{Lorentzian} (with the value of $\Omega_R$ calculated from the relevant matrix element of the perturbation) shows that our system behaves as a two-level system to a good approximation. We note that for larger values of $\beta$, the center of the transition slightly shifts to the right. We attribute this to the Bloch-Siegert effect~\cite{BlochSiegert}, which occurs beyond the rotating wave approximation. Similarly, the deviation of the numerical data from the Lorentzian, visible at positive detunings, is also likely due to the rotating wave approximation not being well satisfied. 

\begin{figure}[t]
    \subfloat[$\beta (x/a)^2$] {\includegraphics[width=\columnwidth]{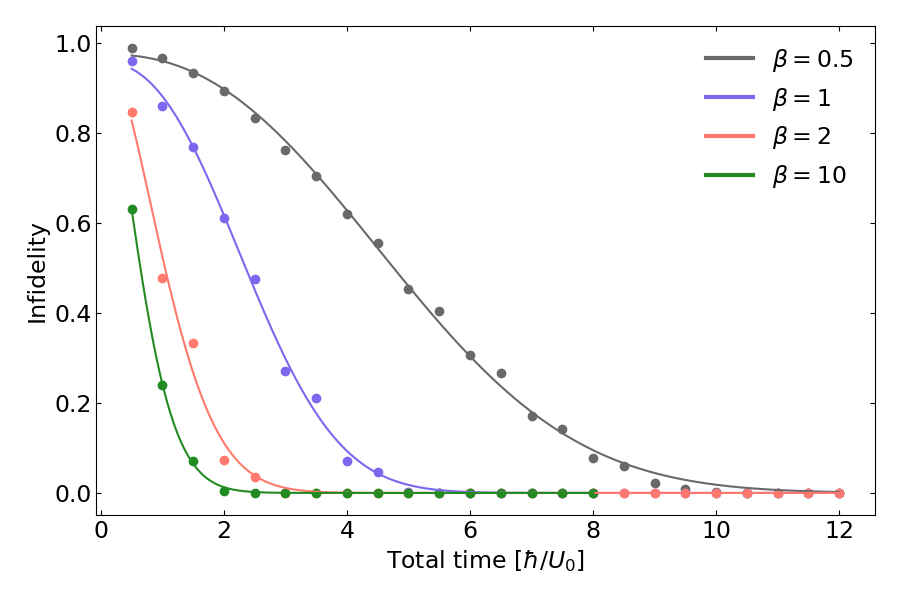}}
    \newline
    \subfloat[$\beta (x/a)$]
    {\includegraphics[width=\columnwidth]{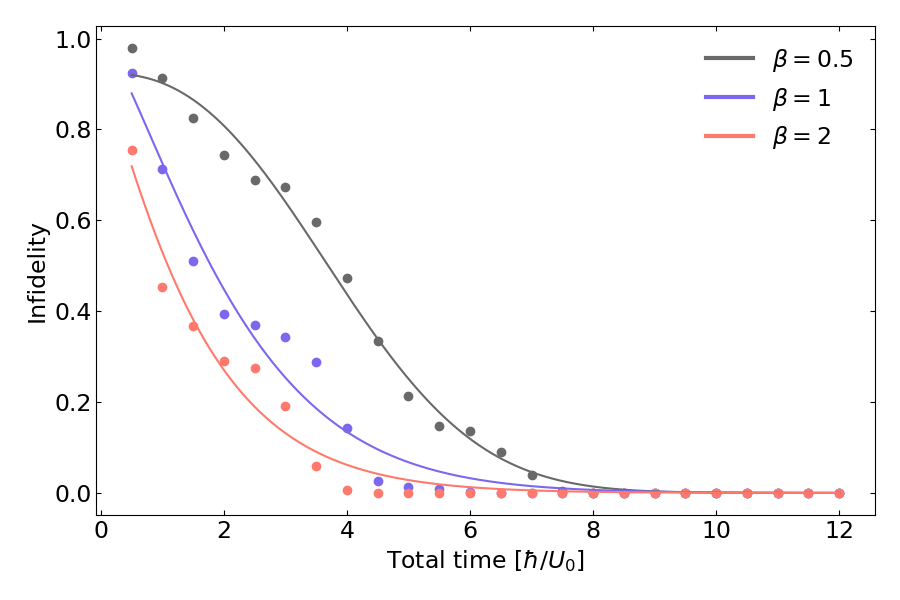}}
    \newline
    \caption{Infidelity vs. time curves for (a) $\ket{0}\rightarrow \ket{2}$ transfer with $\beta (x/a)^2$ modulation and (b) $\ket{0}\rightarrow \ket{1}$ transfer with $\beta (x/a)$ modulation. The strength of the $\beta$ value are labelled in the plot legends. The individual data points are the best results of $2000$ optimizations, and the line is a guide to the eye.}
    \label{fig:FT_curve}
\end{figure}

In an experimental implementation with ultracold atoms (for instance, in rubidium), the typical resonance frequencies will be $\sim 1$~Hz and typical transfer times to the excited state will be on the order of $10$~s. These slow timescales are because the characteristic length scale of the potential $a$ needs to be well above the optical resolution of the optical system that produces the potential. This constraint on $a$ puts an upper limit on $U_0$ via $a=\hbar/\sqrt{mU_{0}}$, and consequently a lower limit on the time unit $\hbar/U_{0}$ \cite{cassettari2022_220203446}. 
Higher-resolution projection systems providing smaller $a$ values (as in quantum gas microscopes~\cite{greiner2009,Bloch2011}) allows for faster state transfers. However, even with modest values of $a$, we can speed up state transfer via optimal control methods, as we show next.

\subsection{\label{ss:QC}Quantum control}

\begin{figure}[t]
    \subfloat[$\beta (x/a)^2$] {\includegraphics[width=0.9\columnwidth]{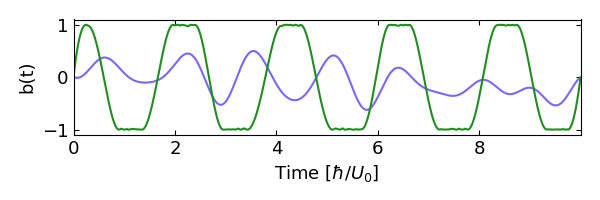}}
    \newline
    \subfloat[$\beta (x/a)$]
    {\includegraphics[width=0.9\columnwidth]{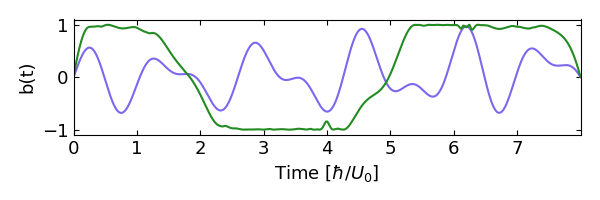}}
    \newline
    \caption{Optimal controls $b(t)$ for (a) $\beta(x/a)^2, \ket{0}\rightarrow \ket{2}$ transfer and (b) $\beta(x/a), \ket{0}\rightarrow \ket{1}$ transfer. Here $\beta=0.5$. The blue (green) line denotes the initial (optimized) control.}
    \label{fig:controls}
\end{figure}

In this section, we explore two state transfer protocols simulated using the QEngine C++ library~\cite{Sherson_2019} and optimized using the GRAPE algorithm~\cite{Khaneja2005} implemented within QEngine:
\begin{itemize}
    \item $\beta (x/a)^2$ modulation for $\ket{0}\rightarrow \ket{2}$ transfer,
    \item $\beta (x/a)$ modulation for $\ket{0}\rightarrow \ket{1}$ and $\ket{0}\rightarrow \ket{3}$ transfer.
\end{itemize}
For each case, we implement a control of the form $\beta\, b(t) (x/a)^n$ with $n = 1, 2$ for the linear and quadratic cases, respectively. The control $b(t)$ is optimized for a given static value of $\beta$, and $-1\leq b(t)\leq 1\, \forall \, t$.

These protocols provide a benchmark that serves to compare optimal control results with resonant driving results, as well as demonstrating an interesting state preparation proof-of-principle that can be useful for, e.g., the resonant cascade protocols described in later sections of the paper. For the optimization, the initial state is the ground state of the prime number potential, and we use regularization~\cite{Winckel_2008, Sherson_2018} to limit the bandwidth of the control in order to preserve experimental viability.

Using the unit scalings provided, the best results are tabulated in Table~\ref{tab:results}, and plots of the fidelity vs. control time for $\beta (x/a)^2$ and $\beta (x/a)$ are shown in Fig.~\ref{fig:FT_curve}. These show the minimum time required for a control to reach $0.99$ fidelity, where fidelity is defined as $\mathcal{F} = |\bra{\psi_\mathrm{des}}\ket{\psi_\mathrm{opt}}|^2$, with $\ket{\psi_\mathrm{des}}$ corresponding to the either $\ket{1}, \ket{2}$, or $\ket{3}$ and $\ket{\psi_\mathrm{opt}}$ the final state after optimization. The infidelity is $\mathcal{I} = 1-\mathcal{F}$. The optimal controls (i.e., those found at the minimum time where $\mathcal{F} > 0.99$) are plotted in Fig.~\ref{fig:controls} for $\beta = 0.5$.
\\
\begin{table}[]
\begin{tabular}{|l"c|c|c|}

\hline
Transfer Time $(\hbar/U_0)$           & $\ket{1}$     & $\ket{2}$     & $\ket{3}$     \\ \thickhline
$\beta=0.5, n=2$             & -             & 9.5           & -             \\ \hline
$\beta=1, n=2$                 & -             & 5.0           & -             \\ \hline
$\beta=2, n=2$               & -             & 3.0           & -             \\ \hline
$\beta=10, n=2$              & -             & 2.0           & -             \\ \thickhline
$\beta=0.5, n=1$               & 7.5           & -             & $\gg12.0$          \\ \hline
$\beta=1, n=1$                   & 5.5           & -             & 9.0           \\ \hline
$\beta=2, n=1$                 & 4.0           & -             & 7.0           \\ \hline
\end{tabular}

\caption{\label{tab:results} Optimized transfer times ($\mathcal{F}>0.99$) from the ground state for a variety of protocols. Note that the case shown as $\gg\SI{12}{\hbar/U_0}$ has not been optimized for longer times due to computer cluster limitations.}
\end{table}

The results in Table~\ref{tab:results} show that quantum control enables the transfer to the desired excited state with fidelities comparable to the case of the periodic perturbations, but with transfer times reduced by up to a factor $\sim5$. Differently from the case of periodic perturbation, with quantum control it is possible to increase $\beta$, hence reducing significantly the transfer time, while keeping the population of higher levels negligible. All the transfer times reported on Table~\ref{tab:results} correspond to experimental times of 5~s or less, which is feasible for state preparation prior to the experiments discussed in the following sections.

\section{Resonance cascades}\label{cascades}
In this section we introduce the idea of using {\it resonance cascades} to study statements from number theory. That is, given a number-theoretical statement, we will first recast the statement or some of its corollaries as an assertion of an existence of an uninterrupted arithmetic progression. This arithmetic progression is a subsequence of a larger sequence of numbers (the principal sequence). The latter sequence will constitute the energy spectrum of a quantum system. 
The step between elements of the arithmetic subsequence in question will correspond to the frequency of a time-dependent perturbation. If the arithmetic progression is fully contained in the principal sequence, the corresponding quantum system will be able to travel along the energy axis, exclusively via resonant transitions.

\begin{table*}
\centering
\begin{tabular}{|>{\centering}m{4.5cm}|>{\centering}m{2.1cm}|>{\centering}m{3cm}|>{\centering}m{2.5cm}|>{\centering}m{2.5cm}|>{\centering}m{2cm}|p{0cm}}
\hline
Number theoretic statement           & Corollary to be studied     & Principal sequence of numbers     & Arithmetic progression & Corresponding quantum spectrum & Excitation frequency ($\hbar\Omega$) & \\ \thickhline
(I) The sequence of natural numbers is closed under multiplication.
& 
Given $\tilde{n}\in\mathbb N$,\\$\forall m \in \mathbb{N}_0$,\\$\tilde{n}^m\in\mathbb{N}$ &  
$\ln{(n)}$,\\$n\in\mathbb{N}$. & 
$m\ln{(\tilde{n})}$,\\$\tilde{n}\in\mathbb{N}$,\\$m \in \mathbb{N}_0$ &
single-particle,\\$E_n^{\text{L}} = U_0 \ln{(n)}$,\\$n\in\mathbb{N}$ &
$U_0 \ln{(\tilde{n})}$ &
\\ \hline
(II) The sequence $S$ of sums of two squares of integers is closed under multiplication. &
Given $\tilde{s}\in S$,\\$ \forall m \in \mathbb{N}_0$,\\$\tilde{s}^m \in S$. &
$\ln{(s_n)}$ with \\ $s_n \in S$, \\ $s_n \neq 0$ &
$m \ln(\tilde{s})$,\\$m\in\mathbb{N}_0$, and $\tilde{s}\in S$,\\$\tilde{s}\neq 0$ & 
single particle, \\ $E_n^{\text{L2}} = U_0\ln{(s_n)}$,\\$s_n\in S$,\\$s_n\neq 0$ &
$U_0 \ln{(\tilde{s})}$ &
\\ \hline
(III) Any even number greater than 2 is a sum of two primes.&
Any even number greater than 4 is the sum of two primes greater than 2. &
$w_{n,\,l}$ with 
$w_{1}=3+3=6$,\\ $w_{2}=3+5=8$,\\ $w_{3,\,1}=5+5=10$,\\ $w_{3,\,2}=3+7=10$,\\ $w_{4}=5+7=12,\,\,\ldots$ &
 $2m$ with $m=3,\,4,\,5,\,\ldots$ & 
two interacting particles, each with $E_{n} = U_{0} p_{n}$ where $p_n$ are primes $>2$, i.e.\ $3,\,5,\,7,\,11\,\ldots$ &
$2U_{0}$ & \\
\hline
\end{tabular}
\caption{\label{tab:number-theory} Number theoretic problems to be studied in this work, where $\mathbb{N} = 1, 2, 3, \ldots$ is the sequence of natural numbers, $\mathbb{N}_0 = 0, 1, 2, \ldots$ is the sequence of whole numbers (i.e., natural numbers including zero). In (II) we do not consider $0^2 + 0^2 = 0$ for technical reasons. Note the degeneracies in the principal sequence of numbers $w_{n,l}$ considered in (III). Preliminary numerical results for case (I) are published in Ref.~\cite{marchukov2024}.}
\end{table*}

The following three examples are considered in this paper (also summarized in Table~\ref{tab:number-theory}):

\vspace{0.1cm}
\noindent \textbf{(I) The number-theoretic statement:} {\it The sequence of natural numbers is closed under multiplication}.

\noindent \textbf{Corollary to be studied:} Given a natural number $\tilde{n}$, any power of it, $(\tilde{n})^m$, will be present in the sequence of natural numbers.

\noindent \textbf{The principal sequence of numbers:} $\ln(n)$ with $n=1,\,2,\,\,3\,\ldots$. 

\noindent \textbf{The arithmetic progression:} $m \ln(\tilde{n})$ with $m=0,\,1,\,2,\,3,\,\ldots$, with
$\tilde{n}$ being a natural number.       

\noindent \textbf{The corresponding quantum system:} One particle in a trap whose energy spectrum is proportional to the logarithms of natural numbers, $E_{n}^{\text{L}} = U_{0} \ln(n),\,\, n=1,\,2\,,3\,,\ldots$. The excitation frequency will be given by $\hbar \Omega = U_{0} \ln{\tilde{n}}$. 

\vspace{0.1cm}
\noindent \textbf{(II) The number-theoretic statement:} {\it The sequence of sums of two squares of integers is closed under multiplication} (Diophantus-Brahmagupta-Fibonacci identity \cite{dudley_Number_Theory1970} (Lemma 1, p.\ 142)).

\noindent \textbf{Corollary to be studied:} Given a sum of two squares $\tilde{s}$, any power of it, $(\tilde{s})^m$, will be present in the sequence of sums of two squares.

\noindent \textbf{The principal sequence of numbers:} $\ln(s_{n})$ with $s_{1}=0^2+1^2=1,\,\,s_{2}=1^2+1^2=2,\,\,,s_{3}=0^2+2^2=4,\,\,s_{4}=1^2+2^2=5,\,\,s_{5}=2^2+2^2=9,\,\,\ldots$.\footnote{We exclude $0^2+0^2=0$ for technical reasons.} 

\noindent \textbf{The arithmetic progression:} $m \ln(\tilde{s})$ with $m=0,\,1,\,2,\,3,\,\ldots$, and
$\tilde{s}$ being a nonzero sum of two squares.  

\noindent \textbf{The corresponding quantum system:} One particle in a trap whose energy spectrum is proportional to the logarithms of sums of two  squares numbers, $E_{n}^{\text{L2}} = U_{0} \ln(s_n),\,\, n=1,\,2\,,3\,,\ldots$. The excitation frequency will be given by $\hbar \Omega = U_{0} \ln{\tilde{s}}$. 

\vspace{0.1cm}
\noindent \textbf{(III) The number-theoretic statement:} {\it Any even number greater than 2 is a sum of two primes} (Goldbach conjecture \cite{dudley_Number_Theory1970} (p.\ 147)).

\noindent \textbf{Corollary to be studied:} Any even number greater than 4 is a sum of two primes greater than 2. 

\noindent \textbf{The principal sequence of numbers, with degeneracies:} $w_{n,\,l}$ with 
$w_{1}=3+3=6,\,\,w_{2}=3+5=8\,\,,w_{3,\,1}=5+5=10,\,\,w_{3,\,2}=3+7=10,\,\,w_{4}=5+7=12,\,\,\ldots$. 

\noindent \textbf{The arithmetic progression:} $m\times 2$ with $m=3,\,4,\,5,\,\ldots$.  

\noindent \textbf{The corresponding quantum system:} Two interacting  particles in a trap whose one-body energy spectrum, 
$E^{\text{P}}_{n} = U_{0} p_{n},\,\,n=2,\,3,\,4,\,5,\,\,\ldots$ is proportional to the prime numbers greater than 2, i.e.\ $3,\,5,\,7,\,11\,\ldots$. The excitation frequency will be given by $\hbar \Omega = 2U_{0}$. 
%

\vspace{0.1cm}


Examples (I) and (II) require a trapping potential where the energy levels are logarithms of natural numbers and logarithms of the sequence of sums of two squares, respectively, as shown in Fig.~\ref{f:potentials_CUMULATIVE}. 
Example (III) requires two weakly interacting particles trapped in the prime number potential in Fig.~\ref{primePot}. Example (I), which is to be regarded as a proof-of-principle, is studied in depth in Section \ref{s:natural_set}. For examples (II) and (III), we outline ideas for future experiments in Section \ref{futureideas}.


\section{Resonance cascade predicated on the closeness of the natural set under multiplication \label{s:natural_set}}
%
\begin{figure*}[t]
\begin{center}
\includegraphics[width=\textwidth]{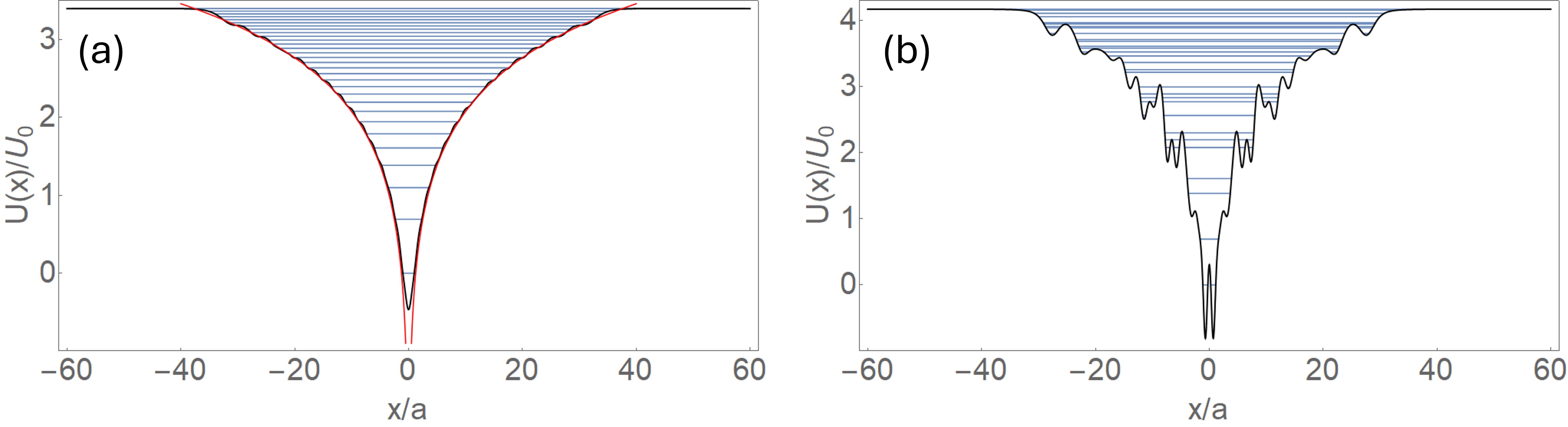}
\end{center}
\caption{Two potentials with a prescribed spectrum. (a) A potential whose lowest 30 energy levels are proportional to the logarithms of the first 30 natural numbers, as considered in Example (I). The red curve is its classical counterpart, Eq.~\eqref{log_n_potential_classical}.
(b) A potential where the lowest 30 energy levels are proportional to the logarithms of the sums of two squares, considered in Example (II).}
\label{f:potentials_CUMULATIVE}
\end{figure*}

\subsection{The set-up}
Consider a one-body potential $U^{\text{L}}(x)$ whose lowest $N_{\text{b}}$ bound state energies are given by 
\begin{align}
\begin{split}
&
E^{\text{L}}_{n} = U_{0} \ln n
\\
&
n=1,\, 2,\, 3,\,\ldots,\, N_{\text{b}}
\end{split}
\label{log_n}
\,\,.
\end{align}
Figure \ref{f:potentials_CUMULATIVE}(a) shows this potential for $N_{\text{b}}=30$. Similarly to the prime number potential of Section \ref{parametricresonance}, this potential is designed using supersymmetric methods \cite{cassettari2022_220203446}. As $N_{\text{b}}$ increases, the potential converges 
to a smooth classical limit. This classical counterpart can be extracted using procedures developed in classical mechanics (see \S 12 in \cite{book_landau_mechanics}). There exists a variety of classical potentials that can approximate our quantum potential; one of these is a logarithmic potential,
\begin{align}
U^{\text{L}}_{\text{cl.}}(x) = U_{0} \ln(\sqrt{\frac{2}{\pi}} \frac{x}{a})
\,\,,
\label{log_n_potential_classical}
\end{align}
where $a\, \equiv \hbar/\sqrt{mU_{0}}$ (see Appendix \ref{ss:logarithmic_potential}).
 
Note that the existence of the classical limit of an artificially constructed potential with an \emph{a priori} prescribed spectrum is not guaranteed. Figure~\ref{f:potentials_CUMULATIVE}(b) shows 
a potential whose first $N_{\text{b}}=30$ energy levels are proportional to the first 30 integers that are equal to the sum of two squares. We see that as $N_{\text{b}}$ increases, this potential retains its substantially quantum nature in that the small potential wells in Fig.~\ref{f:potentials_CUMULATIVE}(b) can be shown to contain only one or two eigenstates. 
\begin{figure*}[t]
\begin{center}
\includegraphics[width=1.\textwidth]{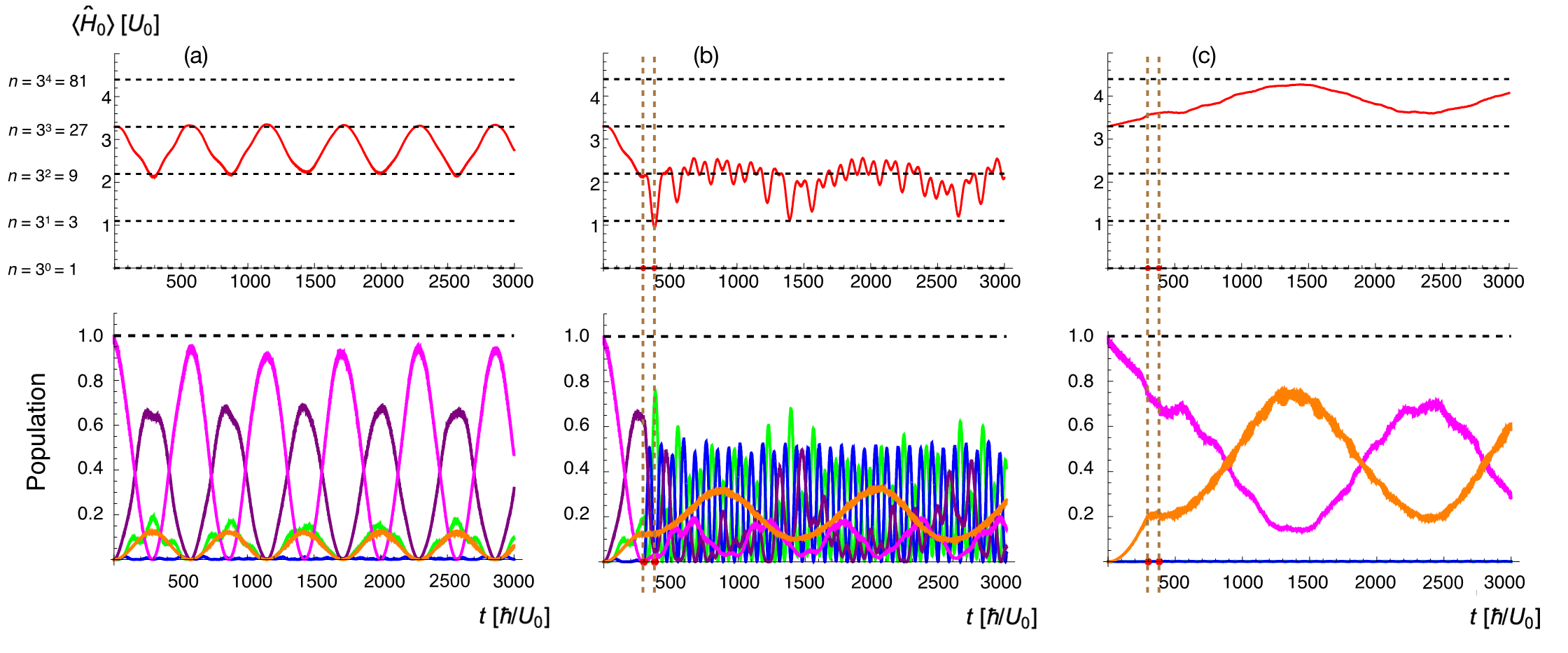}
\end{center} 
\vspace{-0.7cm}
\caption{A numerical experiment designed to verify if all powers of $3$ are present in the set of natural numbers. A single atom is initially placed in the $|n=27\rangle$ state of the potential $U^{\text{L}}(x)$ whose spectrum is given by 
$E_{n}^{\text{L}} = U_{0} \ln(n)$ ($n=1,\,2,\,3,\,\ldots,\,120$). The atom 
is subjected to a periodic perturbation of the form $V(x,\,t) = \beta U^\text{L}(x) \cos(\Omega t)$, with
$\beta =0.3$, and $\Omega = \frac{1}{\hbar} U_{0} \times \ln(\tilde{n})$, where $\tilde{n} = 3$. The upper panel shows the unperturbed energy of the system. The lower panel shows the population of the unperturbed eigenstates constituting the resonant cascade: $|n=1\rangle$ (green), $|n=3\rangle$ (blue), $|n=9\rangle$ (purple), $|n=27\rangle$ (magenta), and $|n=81\rangle$ (orange); in all three cases and for all reported instances of time, these populations comprise no less than 80\% of the total atomic population.  
(a) Mobility is impeded due to ``dark state'' localization (App.\ \ref{ss:dark_states}). Indeed, the lower panel shows that the time evolution is close to Rabi oscillations between $|n=27\rangle$ and $|n=9\rangle$. 
(b) In this numerical experiment, we introduce two ``windows of silence'' of a duration $\Delta t = \frac{\pi}{\Omega}$, 
at $t = \{298.83,\, 380.31\} \times \hbar/U_{0}$ (red dots on the abscissa and dashed vertical lines), during which the perturbation is stopped and then restarted after a quarter of the perturbation period, at the phase it had just before the window. The windows are designed to break the ``dark states'' and in doing so, promote mobility along the energy axis.  The level population indeed  starts moving down in energy while remaining within the set of resonantly coupled energy levels $U_{0} \times \log(n=3^{m})$. The lower panel shows that all five states comprising the resonance cascade become substantially populated. 
(c) A gedanken experiment where $n=9$ is excluded from the set of natural numbers. At a technical level, we create a potential with a 
spectrum $E_{n} = U_{0} \ln(n)$ ($n=1,\,2,\,3,\,\ldots,\,8,\,11,\,12,\,\ldots\,,120$) where the $n=9$ energy level is excluded. We also exclude the $n=10$ level in order to preserve the relative parity of the remaining eigenstates. Such an exclusion can be seen to impede the mobility along the energy axis. In this experiment, we apply the same ``windows of silence'' as in (b). The lower panel shows that the time evolution is close to Rabi oscillations between $|n=27\rangle$ and $|n=81\rangle$.
}
\label{f:cascade}
\end{figure*}

\begin{figure*}[t]
\begin{center}
\includegraphics[width=1.\textwidth]{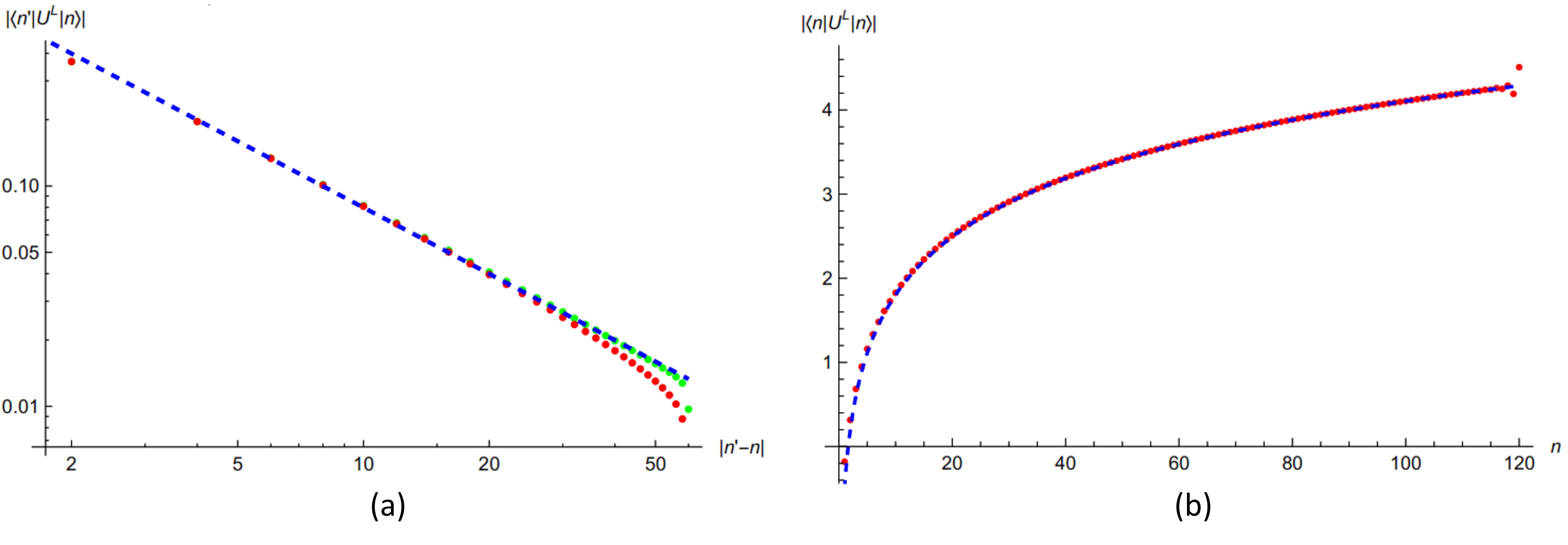}
\end{center}
\vspace{-0.7cm}
\caption{
Semiclassical approximation for the matrix elements of the perturbation (represented, in the case of a parametric excitation, by the 
potential energy) compared to the numerical results. 
(a) Absolute value of the off-diagonal matrix elements of the potential energy, $\langle n' | U^{\text{L}} | n=60 \rangle$, as a function of the absolute value of the index difference, $n'-n$. Only even values of $n'-n$ are represented, since the odd values vanish due to the parity selection rule. The green dots represent the numerical results for $n'>n$, while the red dots correspond to $n'<n$. The $\ln(n)$-spectrum potential used had $N_{\text{b}}=120$ bound states. The blue dashed line is the semiclassical prediction 
$|\langle n' | U^{\text{L}} | n \rangle| \approx A\, U_{0}/|n'-n|$ (see Eq. \eqref{off-diagonal_semiclassics}) with the value of the fit parameter $A=0.80$ representing the best fit for the data with $4 \le |n'-n| \le 30$. (b) Numerically computed diagonal matrix elements of the potential energy, $|\langle n | U^{\text{L}} | n \rangle|$ (red dots), in comparison with the virial formula \eqref{diagonal_semiclassics} (blue, dashed line); no fit parameters were used. 
}
\label{f:matrix_elements}
\end{figure*}

\subsection{Protocol}
Let us choose a natural number of interest, $\tilde{n}$ and its initial power, $m_{0}$. Our goal is to propose an effect that would be predicated on the presence of all powers of  $\tilde{n}$ (up to a practical upper limit) in the set of natural numbers.  
To this end, consider a single atom (or an ensemble of noninteracting atoms) in an excited state $|n=\tilde{n}^{m_{0}}\rangle$, with $\tilde{n}^{m_{0}} < N_{\text{b}}$. Next, we apply a parametric perturbation
\begin{align}
V(x,\,t) = \beta \cos(\Omega t) \, U^{\text{L}}(x)
\,\,,
\label{log_n_perturbation}
\end{align}
where 
\begin{align}
&
\Omega = \frac{1}{\hbar}U_{0} \ln(\tilde{n})
\,\,,
\label{log_n_perturbation_frequency}
\end{align}
with $\beta \ll 1$. The full Hamiltonian becomes
\[
\hat{H}^{\text{L}} = \frac{\hat{p}^2}{2m} + \left(1+\beta\cos(\Omega t)\right) U^{\text{L}}(x)
\,\,.
\] 

This perturbation will induce a resonance cascade, i.e., a cascade of resonant transitions between states,
\begin{widetext}
\begin{align}
|n=1\rangle \stackrel{\Omega}{\leftrightarrow} |n=\tilde{n}\rangle \stackrel{\Omega}{\leftrightarrow} |n=\tilde{n}^2\rangle\stackrel{\Omega}{\leftrightarrow} \ldots \stackrel{\Omega}{\leftrightarrow} |n=\tilde{n}^{m}\rangle \stackrel{\Omega}{\leftrightarrow} \ldots 
\stackrel{\Omega}{\leftrightarrow} |n=\tilde{n}^{m_{0}}\rangle \stackrel{\Omega}{\leftrightarrow} |n=\tilde{n}^{m_{0}+1}\rangle
\stackrel{\Omega}{\leftrightarrow}\ldots,
\,
\label{cascade}
\end{align}
\end{widetext}
which will cause the population to spread over energy levels in a resonant manner. An ability of our system to reach the $|n=1\rangle$ state would serve 
as a ``proof'' that each of the numbers $1,\,\tilde{n},\,\tilde{n}^2,\,\ldots,\,\tilde{n}^{m_{0}}$ belongs to the set of naturals. 

Our first numerical experiment features the following set of parameters: the number of bound states $N_\mathrm{b}$ was set to $120$, and we choose $\tilde{n} = 3$ and $m_0 = 3$ such that the initial state is $|n = 27\rangle$. The perturbation amplitude is $\epsilon = 0.3$. Fig.~ \ref{f:cascade}(a) shows our results. While the members of the cascade, cf. Eq.~\eqref{cascade}, clearly dominated the population 
(see lower panel), the population was localized in the vicinity of $m=3$ and $m=2$. 

In order to understand the origin of this localization, we analyze the behavior of the off-diagonal matrix elements of our perturbation, 
$\langle n' | U^{\text{L}} | n \rangle|$, as shown in Fig.~\ref{f:matrix_elements}(a). We find an inverse-linear dependence on the magnitude of the quantum 
number difference, $|n'-n|$. This assertion is well supported by the large-$|n'-n|$ asymptotics of the semiclassical approximation for the matrix elements, as shown by Eq.~\eqref{off-diagonal_semiclassics_no_fit} in Appendix \ref{ss:log_n_off-diagonal}. We obtain
\begin{align*}
\langle n' | U^{\text{L}} | n \rangle \stackrel{|n'-n|\gg 1}{\approx} 
\left\{
\begin{array}{lcc}
(-1)^{\frac{n'-n}{2}-1} \frac{U_{0}}{|n'-n|} &\text{for}& n'-n \,\text{even,}
\\
0&\text{for}& n'-n \,\text{odd.}
\end{array}
\right.
\end{align*}
Numerically, we find a dependence \[|\langle n' | U^{\text{L}} | n \rangle| \stackrel{|n'-n|\gg 1}{\approx} 
A \,\frac{U_{0}}{|n'-n|} \,\,,\] with $A=0.80$ for $n=60$, cf. Eq.~\eqref{off-diagonal_semiclassics}, while our analytic result 
\eqref{off-diagonal_semiclassics_no_fit} predicts $A=1$. This shows a good agreement between the semiclassical approximation and the numerical calculation for the matrix elements.

In terms of the cascade states, Eq.~\eqref{cascade}, an inverse linear dependence of the transition matrix elements on $|n'-n|$ leads to an exponential decay of the hopping constant within the cascade,
\[
|\langle n'=\tilde{n}^{m+1} | U^{\text{L}} | n= \tilde{n}^{m} \rangle| \sim U_{0} e^{-\gamma m},
\,\,
\] 
with $\gamma = \ln(\tilde{n})$. Under a rotating wave approximation, the system becomes a tight-binding model with the hopping coefficients $J_{m}$ (between the $m$-th and $m+1$-st sites, or states, in our model) that exponentially decay as a function of $m$: 
\begin{align}
\hat{H}_{\text{EL}} =  - J_{0}\sum_{m=-\infty}^{+\infty} e^{-\gamma m}\,\left( |m+1\rangle\langle m| + |m\rangle\langle m+1|   \right).
\label{H_EL_Bis}
\end{align}
The coefficient $J_{0}$ in this exponential lattice model is then given by 
\[
J_{0} = \frac{1}{2}U_{0} \frac{A}{(\tilde{n}-1)}\,.
\] 
In Appendix~\ref{s:exponential_lattice} we study the resulting lattice system in detail. Indeed, we find a localization for any eigenstate energy 
$E$. While a localization in the ``classically forbidden'' region of high $m$ where $|E| \gg J_{m}$ was expected, a localization in the direction of lower $m$, where $|E| \ll J_{m}$ requires an explanation. We find that in this region, the eigenstate wavefunction becomes a member of the kernel of the Hamiltonian, i.e. a state such that the action of the Hamiltonian on it is null (see App. \ref{ss:dark_states}). The structure of the such a so-called dark state wavefunction~\cite{harris1993_552,arimondo1996_257} is as follows (cf. Fig.~\ref{f:exponential_lattice__eigenstates}). The population of the odd sites is zero, as the amplitudes of the even sites form a coherent superposition to ensure that there is no population of the odd sites when the Hamiltonian in Eq.~\eqref{H_EL_Bis} acts on such a superposition. On the other hand, since the even sites have unpopulated neighbors, the action of the Hamiltonian Eq.~\eqref{H_EL_Bis} on such a state is zero.~\footnote{Interestingly, we find that any boundary condition at low $m$ would instantly favor the sites of one parity over another; in the convention used, it is indeed always the odd sites that remain unpopulated in the low $m$ tail of the eigenstates.} Finally, in Appendix \ref{ss:dark_states}, we show that the coherence of a dark state leads to an exponential decay of the population in space. 

To promote mobility along the energy axis, we consider the following amendment to our protocol. Whenever localization occurs, we stop the excitation for a quarter of its period (i.e.\ for a time interval $\Delta t = \frac{\pi}{2\Omega}$) and then restart the excitation, maintaining the phase of the drive as it was in the beginning of this so-called {\it window of silence}. As a result, the free evolution during $\Delta t$ flips the relative sign of the neighboring even sites, thus breaking the coherence of the dark state tail.  Fig.~\ref{f:cascade}(b) shows that this is indeed a valid strategy. With only two windows of silence we are able to transport the population towards the ground state while traveling exclusively through the resonance ladder.  Thus this experiment demonstrates, via simulation, that all of the powers of $3$ lower than or equal to $3^3 = 27$ are present in the set of natural numbers. 

Finally, we simulate a fictitious universe where $3^2=9$ is not a natural number. In Fig.~\ref{f:cascade}(c), we consider a potential with a spectrum similar to the simulation shown in Fig.~\ref{f:cascade}(b)  but with the $n=9$ level excluded, along with the $n=10$ level to preserve the relative parity of the remaining eigenstates. Using the same driving protocol as in Fig.~\ref{f:cascade}(b), we note that the downward energy transfer stops at the missing levels, affirming the ability of our procedure to notice holes in energy-equidistant ladders of the quantum eigenstates~\footnote{Additionally, we verify that the observed drop in the cascade population is predominantly due to excitation to the continuum: the bound states outside the cascade remain unpopulated, thus verifying the resonant model.}. 

In an experimental implementation, the dynamics shown in Fig.~\ref{f:cascade}  will be observable in less than 10~s, i.e., on a similar timescale to the dynamics presented in Section~\ref{parametricresonance}, even though the time axis in Fig.~\ref{f:cascade} extends to larger values compared to the time axis in the figures of Section~\ref{parametricresonance}. The reason is as follows: the value of the energy scale $U_0$ is larger for the logarithmic potential than for the prime number potential. This is due to the cusp-like shape of the logarithmic potential, which is narrower than the flatter profile of the prime number potential, leading to a larger spacing between the energy levels of the logarithmic potential. This larger energy scale thus gives a smaller time unit $\hbar/U_0$. Hence, the physical time span of  Fig.~\ref{f:cascade} ends up being comparable to those of the protocols in Section~\ref{parametricresonance}.

\subsection{Discussion}

For the simulations presented here, we used a parametric perturbation. We have previously considered power-law perturbations, such as the quadratic perturbation used in Section \ref{parametricresonance}, but we found that for power-law drives, the transition matrix elements change more rapidly over the cascade compared to the $|n'-n|^{-1}$ dependence for the case of the parametric perturbation. Hence power-law perturbations appear to be less favorable for cascades. However other choices of perturbation could lead to a better behavior of the matrix elements, with the ideal perturbation being one that gives approximately constant matrix elements over the cascade. This will be an area of further investigation in future work.

Secondly, in the current scheme we position the windows of silence in an {\it ad hoc} manner. Finding the optimal number and timing of the windows required numerical experimentation. In the future, we would like to find a recipe that will be able to suggest the optimal sequence of the windows of silence without a need for numerical or empirical experiments. 

\section{Future experiments with resonance cascades \label{futureideas}}

Building on the results of the previous section, here we propose that resonance cascades can be used to study other statements from number theory. In particular, we address the  Diophantus-Brahmagupta-Fibonacci identity and the Goldbach conjecture, cases (II) and (III) of Table~\ref{tab:number-theory}.

\subsection{Resonance cascade predicated on the validity of the Diophantus-Brahmagupta-Fibonacci identity}
The set of natural numbers is not the only set of integers that is closed under multiplication. As follows from the  Diophantus-Brahmagupta-Fibonacci identity
\cite{dudley_Number_Theory1970} (Lemma 1, p.\ 142), the set of integers that are equal to the sums of two squares (see Eq.~\eqref{sum_of_two_squares}) also has this property~
\footnote{%
To the contrary, the sums of three squares are not closed under multiplication, e.g.\ $11\times 373 = (1^2+1^2+3^2)\times(2^2+12^2+15^2)=4103 \neq a^2+b^2+c^2$. 
Also, according to Lagrange's four-square theorem \cite{dudley_Number_Theory1970} (p. 146), the set of integers that are equal to the sums of four squares coincides with the set of all natural numbers.%
}~\footnote{A relevant fact: if $a$ is an integer that is equal to the sum of two squares, then the prime decomposition of $a$ cannot contain a factor $p^{k}$ such that $p$ is a prime of the form $4 \times n+3$, where $n$ is an integer and $k$ is odd.}. An example of a potential whose spectrum is proportional to the sequence in Eq.~\eqref{sum_of_two_squares} is shown in Fig.~\ref{f:potentials_CUMULATIVE}(b). 

Similarly to the case of a $\ln(n)$-spectrum potential shown in Fig.\ref{f:potentials_CUMULATIVE}(a), the $\ln(s)$-spectrum potential, 
\begin{align}
\begin{split}
&
E^{\text{L2}}_{n}  = U_{0} \ln s_{n} 
\\
&
s_{1},\,  s_{2},\, s_{3},\, s_{4},\, \ldots = 
\\
&
= (0^2+1^2),\, (1^2+1^2),\, (0^2+2^2),\, (1^2+2^2),\,\ldots
\\
&
n=1,\, 2,\, 3,\,\ldots,\, N_{\text{b}}.
\end{split}
\label{sum_of_two_squares}
\end{align}
where the integers $s$ are each the sum of two squares, will support unbounded cascades of resonant transitions for a single particle in the $\ln(s)$-spectrum potential.  Such cascades require that the frequency of the perturbation must have the form
\begin{align*}
\Omega = \frac{1}{\hbar} U_{0} \ln \tilde{s}
\,\,,
\end{align*}
where $\tilde{s}$ is itself a sum of two squares. Note that for technical reasons, we exclude $0^2+0^2=0$ from the set in Eq. \eqref{sum_of_two_squares}.

\subsection{Resonance cascade predicated on the validity of the Goldbach conjecture}
Unlike the $\ln(n)$-spectrum potential considered above, a one-body potential whose spectrum is given by the prime numbers has already been demonstrated theoretically and experimentally~\cite{cassettari2022_220203446}. We examined state manipulation and control in this potential in Sec.~\ref{parametricresonance}. As a slight modification, in this section, we will exclude $2$ from the set of primes, given that $2$ only enters in the Goldbach decomposition of $4$, and not in the Goldbach decomposition of higher even numbers. Then, the potential will have the following one-body spectrum:  
\begin{align}
\begin{split}
&
E^{\text{P}}_{n}  = U_{0} p_{n} 
\\
&
p_{1},\,  p_{2},\, p_{3},\, p_{4},\, \ldots = 3,\, 5,\, 7,\,11,\,\ldots
\\
&
n=1,\, 2,\, 3,\,\ldots,\, N_{\text{b}}.
\end{split}
\,\,,
\label{primes}
\end{align}
\begin{figure}[!t]
\begin{center}
\includegraphics[width=\columnwidth]{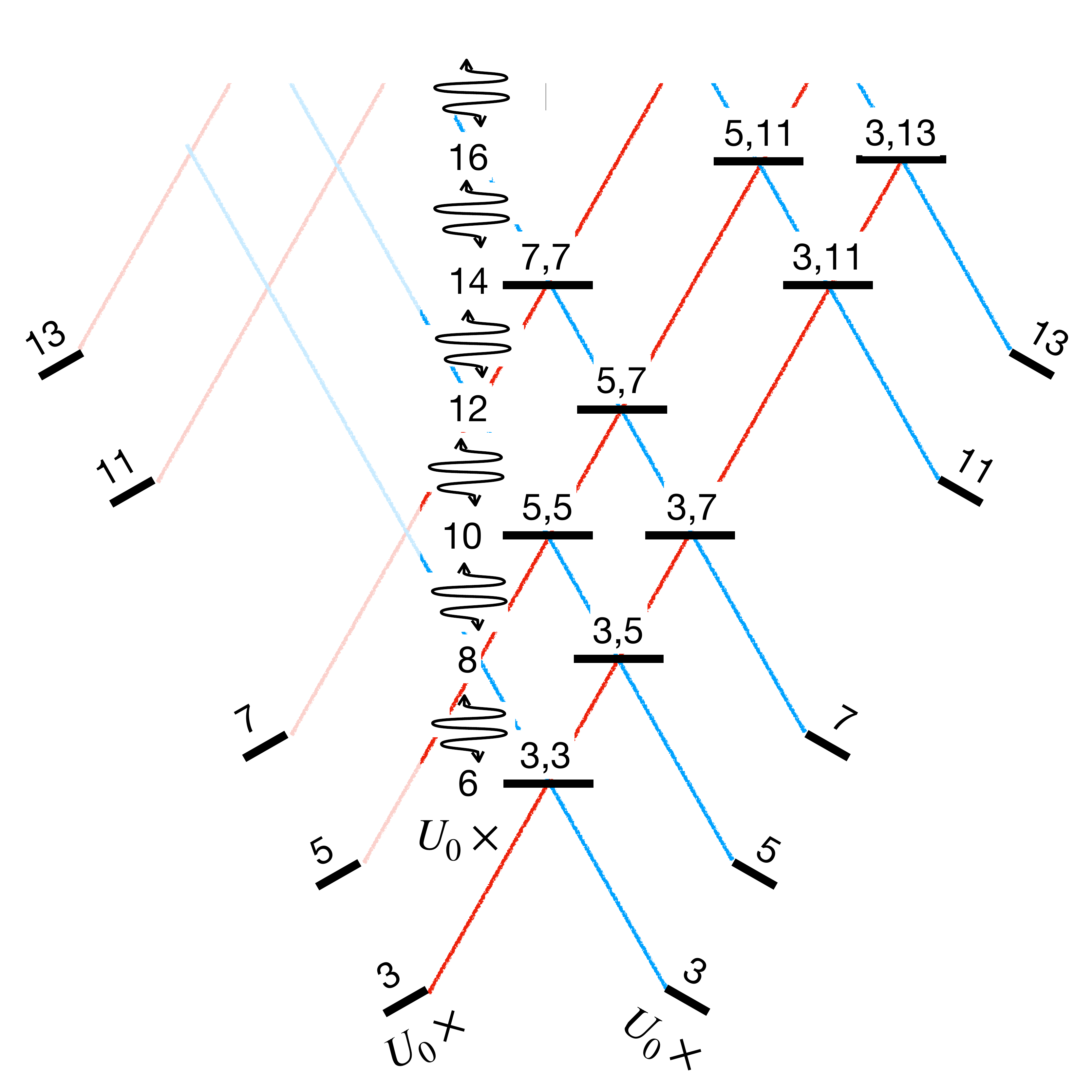}
\end{center}
\caption{Energy spectrum of two weakly interacting bosons in a \textit{p}-spectrum potential. The energy levels, $E = U_{0} \times 4,\,6,\,8,\,\ldots$, are formed by sums of two prime numbers, $p_{1}$ and $p_{2} \ge p_{1}$. In the text, we suggest applying a weak periodic one-body resonant perturbation of frequency $2U_{0}/\hbar$. Up to the level $E= 38U_{0}$, the one-body transitions alone ensure the resonant upward mobility of the system, thanks to a proliferation of twin prime pairs. The gap between $E=38U_{0}$ and $E'=40U_{0}$, which is inaccessible by one-body processes, can be overcome by weakly mixing the original eigenstates using a two-body interaction. Finally, a complete upward mobility of the level population is predicated on the validity of Goldbach's conjecture. 
}
\label{f:goldbach}
\end{figure}

Consider two bosonic atoms in this \textit{p}-spectrum trap (see Fig.~\ref{f:goldbach}). For the moment, let us set interactions to zero. Let us apply a weak {\it one-body} periodic perturbation $\hat{V} \cos(\Omega t)$ of frequency $\Omega = 2U_{0}/\hbar$. This perturbation changes the energy of one of the atoms by $2U_{0}$, while leaving the energy of the other atom unchanged. Let us start from the ground state, $|3,\,3\rangle$, and explore the possibility of traveling infinitely high in energy, via resonant transitions {\it only}. 

Consider two energy-consecutive two-body eigenstates $|p_{1},\,p_{2}\rangle$ and $ |p'_{1},\,p'_{2}\rangle$, with
\begin{align*}
&
w=p_{1}+p_{2},
\\
&
w'=p'_{1}+p'_{2},
\\
&
p_{2} \ge p_{1},
\\
&
p'_{2} \ge p'_{1},\
\\
&
\text{and}
\\
&
w'=w+2,
\end{align*}
where $w$ is an even integer. The energies of these states are are given by $E = \hbar U_{0} w$ and $E' = \hbar U_{0} w' = E + 2U_{0}$, respectively.
If at least one of the primes $p_{1},\,p_{2}$ participating the Goldbach decomposition of $w$ is a lower member of a twin prime pair (a ``lower twin'' from now on), where twin primes are a pair of primes separated by 2, 
then $|p_{1},\,p_{2}\rangle$ and $ |p'_{1},\,p'_{2}\rangle$  will be resonantly coupled with a nonzero matrix element. That is, the one-body nature of the perturbation imposes the following selection rule for the transition $w \to w'=w+2$:

\begin{align}
\begin{split}
\\
&
p'_{1} = p_{1}+2 \text{ and } p'_{2} = p_{2}
\\
& \text{or}
\\
&
p'_{2} = p_{1}+2 \text{ and } p'_{1} = p_{2}
\\
& \text{or}
\\&
p'_{1} = p_{2}+2 \text{ and } p'_{2} = p_{1}
\\
& \text{or}
\\&
p'_{2} = p_{2}+2 \text{ and } p'_{1} = p_{1}.
\end{split}
\label{one_body_selection_rules}
\end{align}
For the allowed transitions, the transition matrix element can be estimated as
\begin{align*}
\langle p'_{1},\,p'_{2}| \hat{V} \otimes \hat{I} + \hat{I} \otimes \hat{V} |p_{1},\,p_{2}\rangle \sim \langle p'| \hat{V} |p\rangle
\,\,,
\end{align*}
where $\langle p'| \hat{V} |p\rangle$ is a typical one-body matrix element of $\hat{V}$.

The first one-body-forbidden transition occurs at 
\begin{align*}
&
w=38=7+31=19+19
\\
&
w'=40=3+37=11+29=17+23
\,\,,
\end{align*}
and it is not known yet whether the number of such transitions is infinite or finite. 
To break the selection rules in Eq.~\eqref{one_body_selection_rules}, one will need to add a second potential, e.g., a weak two-body interaction potential $\hat{V}^{(2)}$. In this case, the previously forbidden transition $|p_{1},\,p_{2}\rangle  \to |p'_{1},\,p'_{2}\rangle$ becomes weakly allowed. To first order in perturbation theory with respect to $\hat{V}^{(2)}$, the transition matrix element is estimated by
\begin{widetext}
\begin{align}
\langle p'_{1},\,p'_{2}| \hat{V} \otimes \hat{I} + \hat{I} \otimes \hat{V} |p_{1},\,p_{2}\rangle \sim \frac{\langle p'| \hat{V} |p\rangle \langle p'_{I},\,p'_{II}| \hat{V}^{(2)} |p_{I},\,p_{II}\rangle }{U_{0}}
\,\,,
\label{two-body_interaction_transition}
\end{align}
\end{widetext}
where $\langle p'_{I},\,p'_{II}| \hat{V}^{(2)} |p_{I},\,p_{II}\rangle$ is a typical two-body matrix element of $\hat{V}^{(2)}$. 

In Appendix \ref{s:twin_primes}, we prove that the number of two-body energy levels that require two-body interactions to be resonantly coupled with the next level above is infinite (equivalently, that there exists an infinite number of even numbers whose Goldbach decomposition does not involve lower twin primes). Finally, it may happen that for a given two-body energy level $E$, the required level above, with energy $E + 2U_{0}$, has no eigenstates. In the language of number theory, this would mean that the even number $w'$ can not be represented as a sum of two primes. However, that would contradict Goldbach's conjecture, which states that every even number greater than $2$ is the sum of two prime numbers. Note that here, as noted previously, we consider an equivalent formulation where every even number greater than $4$ is a sum of two primes, each greater than $2$. If we assume that Goldbach's conjecture is valid, we can expect that starting from the ground state of our system, we can reach any of its excited states via resonant transitions alone.

According to the preliminary study~\cite{marchukov2024}, the atomic population can be exclusively contained within a resonantly coupled subset of energy levels if the typical values of the off-diagonal matrix elements of the perturbation are of the order of $|\langle p'| \hat{V} |p\rangle| \sim 10^{-4} \times U_{0}$ or lower, with $U_{0}$ representing the typical level spacing. Such a stringent requirement reflects the fact that experimentally accessible perturbations
contribute to both the desired transition matrix elements between the cascade states and the parasitic off-resonant  shifts of the unperturbed energy levels. The resulting uppper bound on the value of $|\langle p'| \hat{V} |p\rangle|$  is a result of a compromise between the necessity to ensure the dominance of the resonant processes and the desire to minimize propagation time. Note that the corresponding parasitic level shift is a second order perturbation theory effect, and as such, should be of the order of 
$|\langle p'| \hat{V} |p\rangle|^2/U_{0} \sim 10^{-8} \times U_{0}$ or lower.

On the other hand, the two-body interaction, $\langle p'_{I},\,p'_{II}| \hat{V}^{(2)} |p_{I},\,p_{II}\rangle$, is a time-independent perturbation, and it will affect the level shifts in the first order of the perturbation theory. Assuming that
the resulting undesirable level shifts 
$|\langle p'_{I},\,p'_{II}| \hat{V}^{(2)} |p_{I},\,p_{II}\rangle$
are also of the order of $10^{-8} \times U_{0}$, the resulting allowed values of the transition matrix element \eqref{two-body_interaction_transition} become impractically 
small, of the order of $10^{-12} \times U_{0}$ or less. One way to overcome this difficulty will be to offset the two-body-interaction-induced level shifts using a priori corrections to the unperturbed energy levels. In this case, the matrix element 
\eqref{two-body_interaction_transition} can, conceivably, reach a value comparable to one of the couplings between the regular two-body energy levels.

\section{Conclusions}
In this paper we have examined two types of potentials, prime number potentials and logarithmic potentials, that are relevant for applications of quantum physics to number theory. First, we studied the case in which the system behaves as a two-level system due to the unequal spacing between the energy levels. We have shown that, starting from the ground state, we can prepare an individual excited state with high fidelity. Moreover, transfer time can be reduced with quantum control techniques, which allows for faster state preparation for the experiments we consider in the rest of this work. 

Next, we outlined a strategy for studying problems of number theory using quantum systems. The strategy consists of reducing a number-theoretic statement to an assertion of the existence of an uninterrupted arithmetic sub-sequence in a sequence of numbers, and subsequently finding a quantum system with an energy spectrum proportional to this sub-sequence. The existence of an arithmetic sub-series is illustrated by a cascade of resonant transitions under a periodic perturbation. 

In this spirit, we have studied the example of the closeness of the set of natural numbers under multiplication in great detail. In doing so, we identified a difficulty that may occur in several instances of our class of protocols: the hopping constants in the resulting resonance cascades exhibit a sharp dependence on the site number. As a result, the eigenstates become localized inhibiting mobility along the energy axis. Subsequently, we found a way to unlock the mobility, so far in the downward direction only. The recipe 
calls for introducing a few specifically designed short windows of silence during which the perturbation is turned off. Hence this protocol requires, as the starting point, the preparation of an excited state of the potential. Such preparation could be performed using quantum optimal control methods like those presented in Sec.~\ref{ss:QC}. Due to the logarithmic scaling of the energy levels (which leads to continuum-like behavior at higher levels but relatively large energy spacings between, e.g., the ground and first excited state), we found that gradient-based methods like GRAPE~\cite{Khaneja2005} tend to fail. On the other hand, gradient-free methods like genetic algorithms or methods relying on the constrained optimization within a given basis, like the the CRAB~\cite{CRAB} or GROUP~\cite{GROUP} methods will likely succeed, and both methods have been shown to work for similar atomic state preparation problems, as for instance in Refs.~\cite{vanFrank2016,Weidner2018,Omran2019-hy,Muller2022,Parajuli2023}.

In the future, we plan to work on further improving the quantum control schemes to transfer particles from low-energy to high-energy eigenstates. Effect of temperature, interactions and transverse degrees of freedom should be also studied in detail to optimize the efficiency of number-theory-inspired devices. We also plan to design schemes in which one starts from the ground state of the trap and ignites a resonance cascade propagating upward in energy. In this regard, preliminary results have been shown in related work~\cite{marchukov2024}. We envisage that this will make a future experimental implementation easier. Experimentally, it is easier to prepare an atom in the ground state than in a well-defined excited state. Moreover, an upward-oriented resonance cascade may be readily detectable by counting atoms that leave the trap completely, as they reach the threshold between the bound and the continuous spectra. Hence atom loss would be a clear experimental signature of an uninterrupted cascade. 

\begin{acknowledgments}
HK and CW wish to acknowledge the computational facilities of the Advanced Computing Research Centre, University of Bristol - http://www.bris.ac.uk/acrc/. HK was funded by the EPSRC CDT in Quantum Engineering (EP/SO23607/1). CW acknowledges funding from EPSRC under grant number EP/Y004728/1. 
MO was supported by the NSF Grant No.~PHY-2309271. 
OVM acknowledges the support by the DLR German Aerospace Center with funds provided by the Federal Ministry for Economic Affairs and Energy (BMWi) under Grant No. 50WM1957 and No. 50WM2250E. 
The authors would like to thank the Institut Henri Poincar\'{e} (UAR 839 CNRS-Sorbonne Université) and the LabEx CARMIN (ANR-10-LABX-59-01) for their support.
GM and AT thank the kind hospitality and support from the program
``Out-of-equilibrium Dynamics and Quantum Information of Many-body Systems with Long-range Interactions'' at KITP (Santa Barbara). GM acknowledges the grants PNRR
MUR Project PE0000023-NQSTI and PRO3 Quantum Pathfinder.

A significant portion of this work was produced during the thematic trimester on ``Quantum Many-Body Systems Out-of-Equilibrium'', at the  Institut Henri Poincaré (Paris): AT, GM, and MO are grateful to the organizers of the trimester, Rosario Fazio, Thierry Giamarchi, Anna Minguzzi, and Patrizia Vignolo, for the opportunity to be a part of the program.
\end{acknowledgments}

\appendix

\section{The $\ln(n)$-spectrum potential: matrix elements of the perturbation}
In this article, we are concerned with the design of an experimental protocol that can make apparent the existence of equidistant energy level ladders. Our method of choice is the study of the mobility of an initially prepared quantum state along the energy axis under a resonant perturbation. The design process involves analyzing simple numerical and analytic models of the process; these appendices are devoted to that goal. 

In the main text, we suggest using a parametric drive for inducing a resonance cascade. In particular, given the potential $U^{\text{L}}(x)$ that produces a logarithmic spectrum, Eq. \eqref{log_n},  
\begin{align*}
&
\hat{H} = \frac{\hat{p}^2}{2m} + U^{\text{L}}(x)
\\
&
\hat{H} |n\rangle = E_{n} |n\rangle
\\
&
E_{n} = U_{0} \ln(n)
\\
&
n=1,\,2,\,3,\,\ldots
\,\,,
\end{align*}
we apply a perturbation
\begin{align*}
V(x,\,t) = \beta \cos(\Omega t) U^{\text{L}}(x)
\,\,,
\end{align*}
where 
\begin{align*}
&
\Omega = U_{0} \ln(\tilde{n})
\,\,,
\end{align*}
and $\tilde{n}$ is a natural number. Note also that in the main text, we supplement the $\cos(\Omega t)$ time dependence of the perturbation with a few quarter-period-long windows of silence, as a method to enhance the mobility along the energy axis. 

In the view of the above, estimates for the matrix elements of the potential $\langle n' | U^{\text{L}} | n \rangle$ will become needed. These estimates are the goal of this appendix. 

\subsection{A logarithmic approximation to the  $\ln(n)$-spectrum potential \label{ss:logarithmic_potential}}
In classical mechanics, the solution to the problem of restoring a trapping potential from its frequency as a function of energy 
is well-known \cite{landau_classical___potential_from_period}:
\begin{align*}
x(U) = \frac{1}{\sqrt{2m}} \int_{U(0)}^{U} \frac{dE}{\omega(E)\sqrt{U-E}}
\,\,,
\end{align*}
where $\omega(E)$ is the oscillation frequency as a function of energy, $x(U)$ is the inverse of the potential function 
$U(x)$ for $x\ge 0$, $U(0)$ is the value of the potential at the origin, and the potential $U(x)$ is assumed to be an even function of $x$ such that $U(-x) = U(x)$. On the other hand, given a quantum energy spectrum, $E_{n}$, the classical frequency can readily be extracted as 
\begin{align}
\omega = \frac{1}{\hbar} \frac{d E_{n}}{dn} \,\,.
\label{omega_through_E}
\end{align}
In our case, the spectrum $E_{n}$ is given by Eq.~\eqref{log_n}. Then the classical frequency as a function of energy becomes 
\begin{align}
\omega = \frac{1}{\hbar} U_{0} e^{-\frac{E}{U_{0}}} \,\,.
\label{classical_omega}
\end{align}
Now, assuming $U_{0} = -\infty$, we get Eq.~\eqref{log_n_potential_classical}. Fig.~\ref{f:potentials_CUMULATIVE}(a) shows a good agreement between the quantum $\ln(n)$-spectrum potential and its classical counterpart everywhere except for the substantially quantum region 
$|x|\lesssim a$ (whose size is comparable to the spatial extent of the quantum ground state).

\subsection{Classical motion in a logarithmic potential}
Classical equations of motion generated by the Hamiltonian
\begin{align*}
&
H^{\text{L}}_{\text{cl.}} = \frac{p^2}{2m} + U^{\text{L}}_{\text{cl.}}(x) 
\,\,,
\end{align*}
with $U^{\text{L}}_{\text{cl.}}(x)$ given by Eq.~\eqref{log_n_potential_classical}, can be readily solved, albeit implicitly: 
\begin{align}
t = \pm T(E) 
\left\{
\begin{array}{lcc}
+\frac{1}{4} \erf\left(\sqrt{\ln(\frac{b(E)}{x})}\right) & \text{for} & x>0
\\
-\frac{1}{4} \erf\left(\sqrt{\ln(\frac{-b(E)}{x})}\right) + \frac{1}{2} & \text{for} & x<0 
\end{array}
\right.
\,\,,
\label{log_n_potential_classical__trajectory}
\end{align}
where $T(E) = (2\pi)/\omega(E)$ is the classical period, $\omega(E)$ is the classical frequency given by Eq.~\eqref{classical_omega}, and 
\[b(E) = \sqrt{\frac{\pi}{2}} a e^{\frac{E}{U_{0}}}\] (see Fig.~\ref{f:x_of_t}). 
\begin{figure}[!h]
\begin{center}
\includegraphics[width=.4\textwidth]{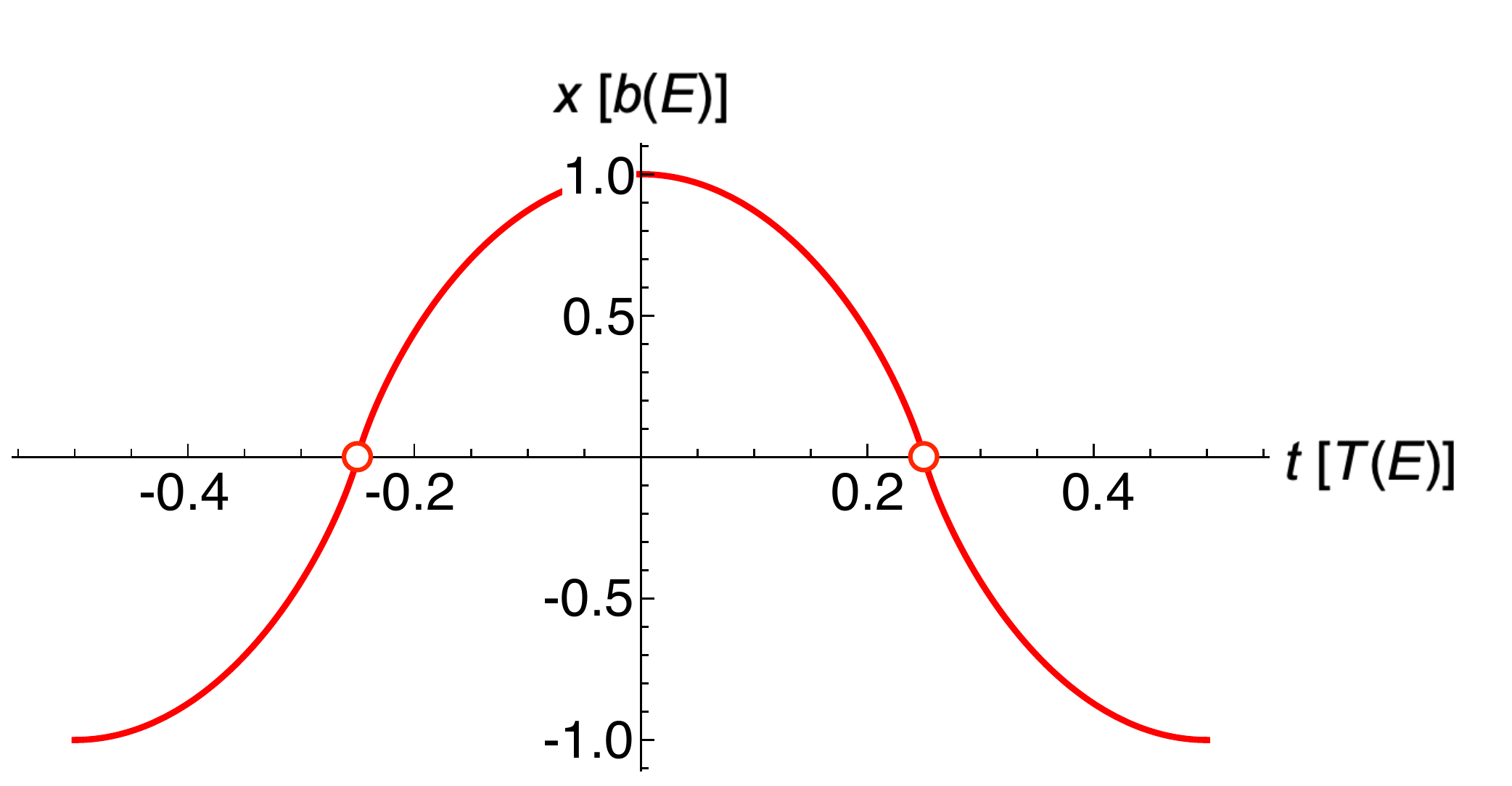}
\end{center}
\caption{Classical trajectory in a logarithmic potential.  
The time dependence of the $x$-coordinate is given by a universal function, Eq.~\eqref{log_n_potential_classical__trajectory}, with the time $t$ and $x$-coordinate properly rescaled in terms of the period $T(E)$ and the amplitude $b(E)$. Note that the $x(t)$ curve experiences singularities at $t=\pm \frac{T}{4}$ where the particle velocity diverges logarithmically (open circles), but this is not readily visible in the plot. These singularities uniquely determine the high-frequency behavior of the Fourier components
of the classical time dependence of the perturbation and, hence, its quantum matrix elements between the energy-distant unperturbed eigenstates.} 
\label{f:x_of_t}
\end{figure}
At $t=\pm \frac{T}{4}$ particle's trajectory experiences logarithmic singularities
\begin{align}
t \mp \frac{T}{4} \approx \mp \frac{T}{4\sqrt{\pi}} \frac{x}{b \sqrt{\ln(\frac{b}{|x|})}} 
\,\,.
\label{log_n_potential_singularities}
\end{align}
In the vicinity of these singularities, the implicit relations in Eq.~\eqref{log_n_potential_singularities} can be approximately inverted, using sequential iterations:
\begin{align*}
&
x_{n+1} = \mp 4\sqrt{\pi} b \left(\frac{t\mp \frac{T}{4} }{T} \right) \sqrt{\ln(\frac{b}{|x_{n}|})}, 
\\
&
x_{0} = \mp 4\sqrt{\pi} \,b\, \left(\frac{t\mp \frac{T}{4} }{T} \right).
\end{align*}
A comparison between the $x_{1}$ and $x_{2}$ iterations shows that 
\begin{align}
x(t) \approx x_{1}(t) = \mp 4\sqrt{\pi} b \left(\frac{t\mp \frac{T}{4} }{T} \right) \sqrt{\ln(\frac{T}{4\sqrt{\pi}(t\mp \frac{T}{4} ) })}
\,\,,
\label{log_n_potential_singularities_2}
\end{align}
constitutes an accurate approximation to the true trajectory if
\[ \sqrt{\ln(4\sqrt{\pi} (t\mp \frac{T}{4} ))} \ll \frac{T}{4\sqrt{\pi} (t\mp \frac{T}{4} )} \,\,;  \]
this condition is satisfied for 
\[
t\mp \frac{T}{4} \ll 0.1T
\,\,.
\]

In addition, let us observe the following. First, the asymptotic expansion of the $\erf$-function, 
\[\erf(y) \stackrel{y\gg 1}{=} 1- e^{-y^2}\left\{ -\frac{1}{\sqrt{\pi}y} + \frac{1}{2\sqrt{\pi}y^3} + \mathcal{O}(\frac{1}{y^5}) \right\} \,\,,  \] 
shows that the approximation in Eq.~\eqref{log_n_potential_singularities} is valid for 
\[
\ln(\frac{b}{|x|}) \gtrsim \frac{1}{2}
\,\,.
\]
On the other hand, the logarithmic singularity in the classical potential given by Eq.~\eqref{log_n_potential_classical} is regularized in the quantum case 
(Fig.~\ref{f:potentials_CUMULATIVE}(a)), at $|x|\sim a$. This cut-off introduces the following bound:
\[
\ln(\frac{b}{|x|}) \lesssim n
\,\,,
\] 
where $n$ is the typical eigenstate index of interest. In our case, $3 \lesssim n \lesssim 27$. 

Observe that in Eq.~\eqref{log_n_potential_singularities}, another constant $b$, outside of the logarithm, depends on $n$ exponentially. 
Accordingly, in these estimates, we will replace the logarithm appearing in Eq.~\eqref{log_n_potential_singularities} by a constant $L$:
\begin{align}
\begin{split}
&
\ln(\frac{b}{|x|}) \to L
\\
&
\frac{1}{2} \lesssim L \lesssim 27
\end{split}
\,\,.
\label{log_fudge}
\end{align}
Interestingly, in the estimates that follow, the actual value of the constant $L$ turns out to be irrelevant. As it follows from our derivation below, the singularity in Eq.~\eqref{log_n_potential_singularities} controls the behavior of the quantum off-diagonal matrix elements of the potential energy, for large differences between the quantum numbers.   

\subsection{Semiclassical approximation for the  off-diagonal matrix elements of coordinate-dependent observables in one-dimensional 
traps: general results}
In Ref.~\cite{landau_quantum___quasiclassics}, one can find a set of results on a semi-classical approximation for off-diagonal matrix elements of quantum 
observables, both for the case of two close energies (\S 48 of Ref.~\cite{landau_quantum___quasiclassics}), and for two energies far apart (\S 51). Before we address the question of the matrix elements of the potential $U^{\text{L}}(x)$ that generates the $\ln(n)$ spectrum given in Eq.~\eqref{log_n}, we will restate the main results in the former limit (shown to be relevant to our case), focusing on the case where the observable is a function of a coordinate.   

Consider a one-dimensional quantum potential well with a potential energy $U(x)$. Let $\hat{A} = A(x)$ be the observable of interest. We will be interested in a semi-classical approximation to the matrix elements of the observable $\langle n' | \hat{A} | n \rangle$, where  $| n \rangle$, $| n' \rangle$ indicate eigenstates of the system; the corresponding eigenenergies are $E_{n}, E_{n'}$. 

In the classically allowed region, $E \ge U(x)$, the semi-classical approximation to the eigenstate wavefunction is
\begin{align}
\psi_{n}(x) = \frac{2}{\sqrt{T(E_{n}) v(x,\,E_{n})}} \cos(\frac{1}{\hbar} \int_{x_{1}(E_{n})}^{x} p(x',\,E_{n}) dx' -\frac{\pi}{4}) 
\label{general_WKB_wavefunction}
\end{align}
where $T(E) = 2\pi/\omega(F)$ is the classical oscillation period as a function of energy, $\omega(E)$ is the classical frequency, \[p(x,\,E) \equiv \sqrt{2m(E-U(x))}\] is the magnitude of the classical momentum as a function of coordinate and energy, $v(x,\,E) \equiv p(x,\,E)/m$ is the the magnitude of the classical velocity, and $x_{1}(E)$ is the left turning point of particle's trajectory. i.e.\ the smallest of the two solutions of an algebraic equation $U(x) = E$. Within the accuracy of the semiclassical approximation, the wavefunction Eq.~\eqref{general_WKB_wavefunction} is normalized to unity. More specifically, its normalization integral is unity if one replaces the $\cos^2(\ldots)$ appearing there by $\frac{1}{2}$ (see \S48 in \cite{landau_quantum}).

Next, we will introduce an energy $E$ close to each of the energies $E_{n}$ and $E_{n'}$, i.e., $E\approx E_{n},\,E_{n'}$. A concrete choice for this energy is, to the leading order of the semiclassical approximation, irrelevant. The choice  $E= (E_{n}+E_{n'})/2$ may, in some cases, improve accuracy beyond the leading order \cite{olshanii_book_back_of_envelope}; this choice also preserves the Hermitian nature of the matrix for the given observable. In this text, we will not commit to any particular convention.

For a matrix element of an observable $A(x)$, between an eigenstate $|n'\rangle$ and an eigenstate $|n\rangle$, we get:
\begin{widetext}
\begin{align}
\begin{split}
\langle n' | \hat{A} | n \rangle &=
\int_{-\infty}^{+\infty} dx A(x) \psi_{n'}(x) \psi_{n}(x)
\\
&\approx
\frac{4}{T(E)}\int_{x_{1}(E)}^{x_{2}(E)}  \frac{dx}{v(x,\,E)} A(x) 
\cos(\frac{1}{\hbar} \int_{x_{1}(E_{n'})}^{x} p(x',\,E_{n'}) dx' -\frac{\pi}{4}) 
\cos(\frac{1}{\hbar} \int_{x_{1}(E_{n})}^{x} p(x'',\,E_{n}) dx'' -\frac{\pi}{4})
\\
&\approx
\frac{2}{T(E)}\int_{x_{1}(E)}^{x_{2}(E)}  \frac{dx}{v(x,\,E)} A(x) 
\cos(\int_{x_{1}(E)}^{x} dx' (p(x,\,E_{n'})-p(x,\,E_{n})))
\\
&\approx
\frac{2}{T(E)}\int_{x_{1}(E)}^{x_{2}(E)}  \frac{dx}{v(x,\,E)} A(x) 
 \cos(\frac{1}{\hbar} \int_{x_{1}(E)}^{x} dx' (p(x',\,E_{n'})-p(x',\,E_{n})))
\\
&\approx
\frac{2}{T(E)}\int_{x_{1}(E)}^{x_{2}(E)}  \frac{dx}{v(x,\,E)} A(x) 
 \cos(\int_{x_{1}(E)}^{x} \frac{dx'}{v(x,\,E)} \omega(E)(n'-n))
 \\
&=
\frac{2}{T(E)}\int_{t_{1}}^{t_{1}+\frac{T(E)}{2}}  dt \,A(x(t)) 
 \cos((t-t_{1}) \omega(E)(n'-n))
\\
&=
\frac{1}{T(E)}\int_{t_{1}}^{t_{1}+\frac{T(E)}{2}}  dt \,A(x(t)) 
 \cos((t-t_{1}) \omega(E)(n'-n)) + \frac{1}{T(E)}\int_{t_{1}}^{t_{1}+\frac{T(E)}{2}}  dt \,A(x(t)) 
 \cos((t-t_{1}) \omega(E)(n'-n)) 
\\
&=
\frac{1}{T(E)}\int_{t_{1}}^{t_{1}+\frac{T(E)}{2}}  dt \,A(x(t)) 
 \cos((t-t_{1}) \omega(E)(n'-n))  
 +
\frac{1}{T(E)}\int_{t_{1}-\frac{T(E)}{2}}^{t_{1}}  dt \,A(x(t)) 
 \cos((t-t_{1}) \omega(E)(n'-n)). 
\end{split}
\,\,.
\label{off-diagonal_derivation}
\end{align}
\end{widetext}
Above a semiclassical regime where the potential $U(x)$ does not change appreciably on the scale of the de Broglie wavelength, 
$\lambda_{\text{d-B}} \sim \hbar/p$, is assumed; we also used the relationship in Eq.~\eqref{omega_through_E}. Here and below, $t_{1}$ is the time when the particle reaches the left turning point, i.e., where $x(t_{1}) = x_{1}$. We also used $dt = dx/v(x,\,E)$. Finally, we used the fact the functions $A(x(t))$ and $\cos((n'-n) \omega(E)(t-t_{1}) )$ are even with respect to a $(t-t_{1}) \to -(t-t_{1})$ sibstitution and periodic with a period $T(E)$. We arrive at the familiar semiclassical expression for the matrix elements of a coordinate-dependent observable $A(x)$ between two close energy levels~\cite{landau_quantum___quasiclassics} 
\begin{align}
\langle n' | \hat{A} | n \rangle  
\approx \frac{1}{T(E)}\int_{-\frac{T(E)}{2}}^{+\frac{T(E)}{2}}  d\tau \,A(x(t_{1}+\tau)) \cos(\Delta n \, \omega(E)\tau )
\label{off-diagonal_general_result_alt}
\\
\approx \frac{(-1)^{\Delta n}}{T(E)}\int_{-\frac{T(E)}{2}}^{+\frac{T(E)}{2}}  d\tau' \,A(x(t_{2}+\tau')) \cos(\Delta n \, \omega(E)\tau' )
\label{off-diagonal_general_result_alt_2}
\,\,.
\end{align}
Under this approximation, the matrix element in question becomes a cosine Fourier component of the time dependence of the classical counterpart. Here, $\Delta n=n'-n$, $E \approx E_{n},\,E_{n'}$, $t_1 (t_2)$ are the left (right) turning point times, and $\omega(E)$ is the frequency at energy $E$. 
Note that for this choice of the lower bound of the Fourier integral, the corresponding sine Fourier component vanishes.  

\subsection{Off-diagonal matrix elements of the potential energy for the $\ln(n)$-spectrum potential \label{ss:log_n_off-diagonal}}
In this subsection we are going to use the semi-classical formula Eq.~\eqref{off-diagonal_general_result_alt_2} to estimate the off-diagonal matrix elements of the potential energy $\langle n' | U^{\text{L}} | n \rangle$, in conditions where the quantum number difference $n'-n$ is large compared to unity so that the existing intuition for asymptotic expansions of Fourier integrals is applicable~\cite{bleistein_Fourier_theory_1986} but small compared to $n$. The latter constraint ensures that the formula \eqref{off-diagonal_general_result_alt_2} is still valid. Within the accuracy of the semiclassical approximation, we can replace the $\ln(n)$-spectrum potential Eq.~\eqref{log_n} with its classical counterpart Eq.~\eqref{log_n_potential_classical}. The classical trajectory is given by Eq.~\eqref{log_n_potential_classical__trajectory}. 

According to Ref.~\cite{bleistein_Fourier_theory_1986}, at large orders of a Fourier expansion, the Fourier components are dominated by the singularities in the time-dependence. In our case, we are dealing with logarithmic singularities. Indeed, substituting the approximation Eq.~\eqref{log_n_potential_singularities_2} 
to the potential Eq.~\eqref{log_n_potential_classical} and using the substitution Eq.~\eqref{log_fudge}, one gets 
\begin{align}
U^{\text{L}}_{\text{cl.}}(x(t)) \stackrel{t\approx \mp \frac{T}{4}}{\approx} U_{0} \ln(4\sqrt{2}\left(\frac{b}{a}\right) \sqrt{L} \left(\frac{|t\mp \frac{T}{4}|}{T}\right))
\,\,.
\label{potential_of_t_singularity}
\end{align}
To estimate the Fourier integral in Eq.~\eqref{off-diagonal_general_result_alt_2} at large quantum number differences ($1 \ll |n'-n| \ll n$), we extend the integral to the full axis of time, use the approximation in Eq.~\eqref{potential_of_t_singularity}, and introduce an ultraviolet cut-off, shown later to be removable. The following integral emerges:
\begin{align*}
\lim_{\lambda\to 0} \int_{-\infty}^{+\infty} d\tau \exp(i \tilde{\Omega} \tau) \exp(-\lambda |\tau|) \ln(\sigma |\tau|) 
= -\frac{\pi}{|\tilde{\Omega}|}
\,\,.
\end{align*}
The off-diagonal matrix elements of the potential can be estimated as 
\begin{align}
\langle n' | U^{\text{L}} | n \rangle \stackrel{|n'-n|\gg 1}{\approx} 
\left\{
\begin{array}{lcc}
(-1)^{\frac{n'-n}{2}-1} \frac{U_{0}}{n'-n} &\text{for}& n'-n = \text{even}
\\
0&\text{for}& n'-n = \text{odd}.
\end{array}
\right.
\label{off-diagonal_semiclassics_no_fit}
\end{align}
A comparison with the numerical results shown in Fig.~\ref{f:off-diagonal_FULL_INFO} shows that while Eq.~\eqref{off-diagonal_semiclassics_no_fit} captures the overall scaling, 
\begin{align}
|\langle n' | U^{\text{L}} | n \rangle| \stackrel{|n'-n|\gg 1}{\approx} 
A \,\frac{U_{0}}{|n'-n|}
\label{off-diagonal_semiclassics}
\,\,,
\end{align}
it  overestimates the coefficient $A$ by $25\%$, predicting that $A=1$ instead of the numerically-obtained $A=0.80$ 
(cf. Fig.~\ref{f:matrix_elements}(a)).

\begin{figure}[!h]
\begin{center}
\includegraphics[width=.5\textwidth]{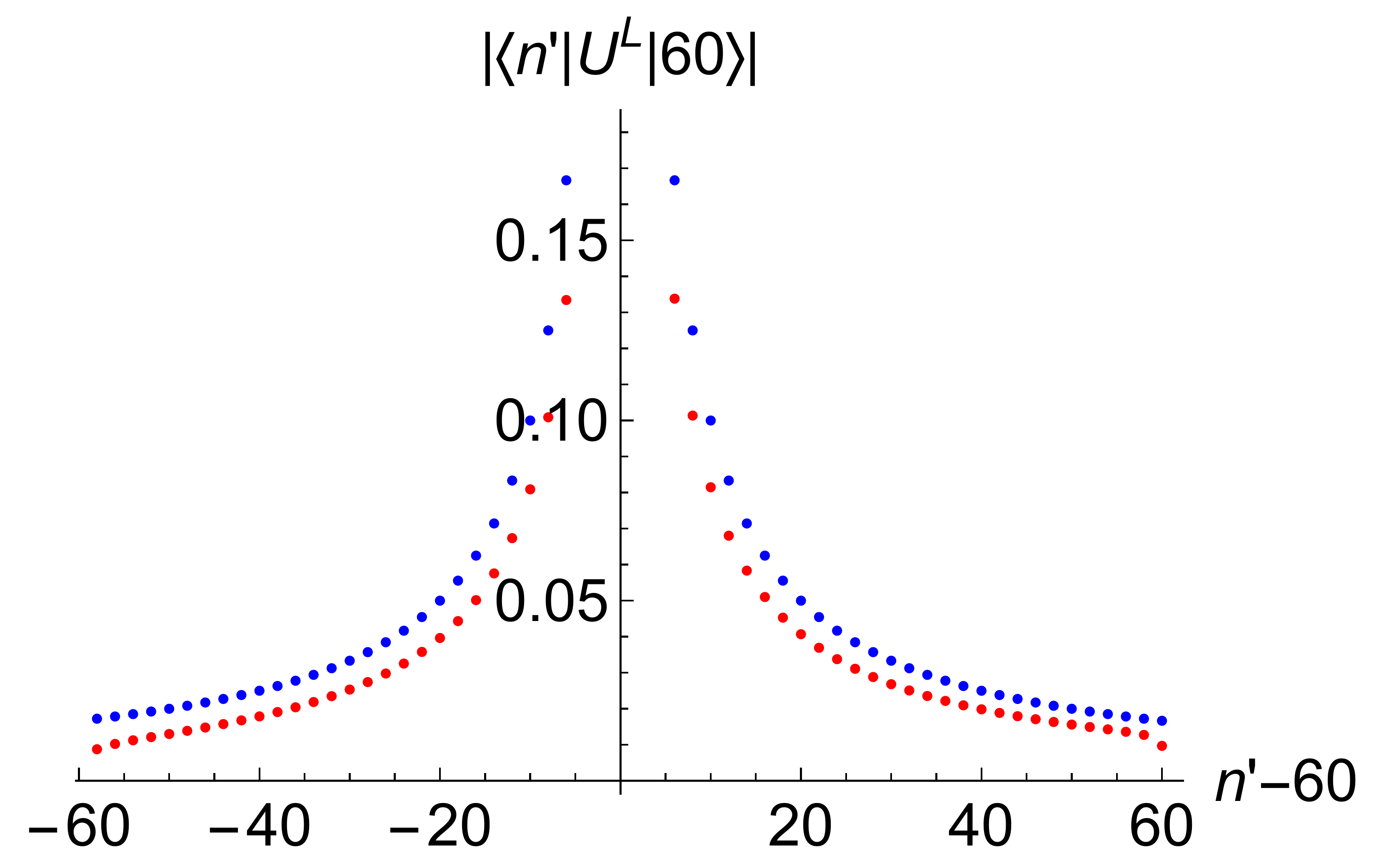}
\end{center}
\caption{A more complete presentation of the off-diagonal matrix elements of the perturbation (again represented, in the case of a parametric excitation, by the potential energy).
This figure is supplementary to the one presented in Fig.~\ref{f:matrix_elements}(a). Here, the full information about the sign the quantum number differences $n'-n$ is given.  Only even values of $n'-n$ are represented, since the odd values vanish due to parity. The red dots represent numerical results. The $\ln(n)$-spectrum potential used had $N_{\text{b}}=120$
bound states. We also present the theoretical prediction \eqref{off-diagonal_semiclassics_no_fit} that contains no fit parameters (blue dots).   
}
\label{f:off-diagonal_FULL_INFO}
\end{figure}
%

\subsection{Diagonal matrix elements of the potential energy}
For completeness, we present the semiclassical approximation for the diagonal matrix elements of the potential energy. Again, within the accuracy of the semiclassical approximation, we can replace the $\ln(n)$-spectrum potential Eq.~\eqref{log_n} with its 
classical counterpart Eq.~\eqref{log_n_potential_classical}.

Generally, diagonal matrix elements of observables are approximated using classical temporal averages. In our case however, a virial theorem that has been previosly applied to a logarithmic potential in Ref.~\cite{mack2010_032119} allows us to avoid the temporal average integrals. Using the result obtained in \cite{mack2010_032119}, we get: 
\begin{align}
\langle n | U^{\text{L}} | n \rangle \approx U_{0} \ln(e^{-\frac{1}{2}} n)
\label{diagonal_semiclassics}
\,\,.
\end{align}
Fig.~\ref{f:matrix_elements}(b) shows an excellent agreement between our prediction and \emph{ab initio} results for a $\ln(n)$-spectrum potential with $N_{\text{b}} = 120$ bound states. 

\section{The exponential lattice \label{s:exponential_lattice}}
The goal of this appendix is to develop a theory for a quantum infinite one-dimensional lattice whose hopping coefficients decay exponentially in space. This model is relevant to the problem of a parametric excitation of the $\ln(n)$-spectrum potential. In particular, using the exponential lattice model, we were able to identify the source of a localization in the space of the unperturbed eigenstates and devise the ways to break such a localization.

Consider the following one-body Hamiltonian:
\begin{align}
\hat{H}_{\text{EL}} =  - J_{0}\sum_{m=-\infty}^{+\infty} e^{-\gamma m}\,\left( |m+1\rangle\langle m| + |m\rangle\langle m+1|   \right)
\,\,,
\label{H_EL}
\end{align}
where $m$ are the indices of the lattice sites, and  $\gamma$ is a positive constant. Below, we will discuss the solutions to the time-independent Schr\"{o}dinger equation, 
\begin{align}
\begin{split}
&
-J_{0} e^{-\gamma m} \left(e^{\gamma} \psi_{m-1} +  \psi_{m+1}\right) = E \psi_{m}
\\
&
m = 0,\,\pm 1,\,\pm 2,\,\ldots
\end{split}
\,\,,
\label{Schrodinger_EL}
\end{align}
where $\psi_{m}$ is the coordinate representation of an eigenstate 
\begin{align*}
|\psi\rangle = \sum_{m=-\infty}^{+\infty} \psi_{m} \, |m\rangle
\end{align*}
with an energy $E$
\begin{align*}
\hat{H}_{\text{EL}} |\psi\rangle = E |\psi\rangle
\,\,.
\end{align*}

For reasons that will become clear below, we have to start our discussion not from the eigenenergies but from the  
$m\to -\infty$ behavior of the eigenstates. We will proceed to the analysis of the spectrum, followed by by the $m\to +\infty$ asymptotic behavior. 

\subsection{Eigenfunctions in the ``dark state'' region, $m\to -\infty$ \label{ss:dark_states}}
Consider an eigenstate $|\psi\rangle$ with an energy $E$. At  
\[m \ll \frac{1}{\gamma} \ln\left(\frac{J_{0}}{|E|}\right) \,\,, \]
where $|E| \ll J_{m}$, the coefficients in front of the wavefunction components $\psi_{m\pm1}$ 
in the Schr\"{o}dinger equation Eq.~\eqref{Schrodinger_EL}
exponentially explode, reducing the problem to 
finding the kernel, i.e. the ``dark states'' (DS) \cite{harris1993_552,arimondo1996_257}, of the Hamiltonian in Eq.~\eqref{H_EL}, i.e., where 
\begin{align}
\hat{H}_{\text{EL}} |\psi_{\text{DS}}\rangle = 0
\,\,,
\label{EL_kernel}
\end{align}
or
\begin{align}
\begin{split}
&
e^{\gamma} (\psi_{\text{DS}})_{m-1} +  (\psi_{\text{DS}})_{m+1} = 0,
\\
&
m = m_{\text{DS}},\,m_{\text{DS}}-1,\,m_{\text{DS}}-2,\,\ldots
\end{split}
\,\,
\label{Schrodinger_EL_DS}
\end{align}
Here and below, $m_{\text{DS}}$ is the right boundary of the zone of validity of the ``dark state'' approximation in Eq.~\eqref{Schrodinger_EL_DS}, and $J_{m} \equiv J_{0} e^{-\gamma m}\,\,$.

The wave function $(\psi_{\text{DS}})_{m}$ can be immediately found to be  
\begin{align}
\begin{split}
&
\psi_{m} \stackrel{m\to -\infty}{\approx} (\psi_{\text{DS}})_{m}
\\
&
(\psi_{\text{DS}})_{m} = 
\left\{
\begin{array}{ccc}
(-1)^{\frac{m_{\text{DS}}-m}{2}} e^{-\gamma \frac{m_{\text{DS}}-m}{2}} \psi_{m_{\text{DS}}} 
&\text{for}&  m-m_{\text{DS}} = \text{even}
\\
0
&\text{for}&  m-m_{\text{DS}} = \text{odd} 
\end{array}
\right.
\\
&
m = m_{\text{DS}},\,m_{\text{DS}}-1,\,m_{\text{DS}}-2,\,\ldots,
\end{split}
\label{psi_EL_DS}
\end{align}
where $\psi_{m_{\text{DS}}}$ is the value of the {\it exact} eigenfunction at $m_{\text{DS}}$ and is presumed to be known. Expectedly, in terms of lattices with constant hopping coefficients, this state can be interpreted a superposition of the two zero energy plane waves with momenta $\pm \pi/2$. 

Note the following however. As $m$ tends to $-\infty$, the wavefunction in Eq.~\eqref{psi_EL_DS} decays as $e^{\gamma \frac{m}{2}}$,  while the matrix elements of the Hamiltonian in Eq.~\eqref{H_EL} explode more quickly, i.e.\ as $e^{-\gamma m}$. The only reason why the energy of the eigenstate does not diverge is the destructive interference between the two terms in Eq.~\eqref{Schrodinger_EL_DS}. 

Imagine now that one decides to truncate the lattice at some negative position $m_{\text{left}}$, in such a way that the allowed values of the coordinate $m$ now span the range 
$m = m_{\text{DS}},\,m_{\text{DS}}-1,\,m_{\text{DS}}-2,\,\ldots,\, m_{\text{l}}+1,\, m_{\text{left}}$. The norm of the wavefunction 
in Eq.~\eqref{psi_EL_DS} converges to a finite value for $m_{\text{left}} \to -\infty$. Hence in this limit, the wavefunction 
remains finite. However, in the same limit, the left hand side of Eq.~\eqref{EL_kernel} will exponentially explode,
\begin{widetext}
\begin{align*}
\hat{H}_{\text{EL}} |\psi_{\text{DS}}\rangle = 
-J_{0} e^{-\gamma m_{\text{left}}} 
(-1)^{\frac{m_{\text{DS}}-(m_{\text{left}}+1)}{2}} e^{-\gamma \frac{m_{\text{DS}}-(m_{\text{left}}+1)}{2}} \psi_{m_{\text{DS}}} 
\propto e^{+\gamma \frac{|m_{\text{left}}|}{2}} 
\,\,,
\end{align*}
\end{widetext}
unless $m_{\text{left}}$ and $m_{\text{DS}}$ have the same parity, implying that $m_{\text{DS}}-m_{\text{left}}$ is even.

We arrive at the following conclusion: in the problem at hand, extending our lattice to $m=-\infty$ may produce unphysical eigenstates. Hence, we suggest the following amendment to our model: we will continue being interested in the eigenstates that are localized far away from the left boundary $m_{\text{left}}$, rendering the actual position of this boundary irrelevant. However, the {\it parity} of $m_{\text{left}}$ will remain important. Without loss of generality, we may assume that $m_{\text{left}}$ is even. 
Finally, using the fact that $m_{\text{DS}} - m_{\text{left}}$ must be even, we establish the following rule: the infinite lattice solutions of the Schr\"{o}dinger 
equation in Eq.~\eqref{Schrodinger_EL} must be {\it post-selected} in such a way that only the eigenstates with the ``dark state'' asymptotic behavior Eq.~\eqref{psi_EL_DS} where $m_{\text{DS}}$ is even are kept. This rule plays an important role in the next subsection. 

\subsection{The $+E \leftrightarrow -E$ symmetry, translational invariance, and the energy spectrum}
Consider the Schr\"{o}dinger equation in Eq.~\eqref{Schrodinger_EL}. Two properties of the spectrum can be proven.

\begin{property}[Positive-negative energy parity]
\label{prop:sign_change}
Let $\psi^{(+)}_{m}$ be a solution of the eigenvalue problem Eq.~\eqref{Schrodinger_EL} corresponding to an eigenenergy
$E=E^{(+)}$. Then the eigenstate-eigenenergy pair 
\begin{align}
\begin{split}
&
\psi^{(-)}_{m} = (-1)^{m} \psi^{(+)}_{m}
\\
&
E=E^{(-)}=-E^{(+)} 
\end{split}
\,\,.
\label{EL_energy_sign_change}
\end{align}
is also a solution of Eq.~\eqref{Schrodinger_EL}. The statement will remain valid if the infinite lattice is reduced to a ray or a line segment bounded by a Dirichlet boundary condition. 
\end{property}
\begin{proof}
Indeed, substituting $\psi^{(-)}_{m}$ to Eq.~\eqref{Schrodinger_EL}, we get 
\begin{align*}
&
-J_{0} e^{-\gamma m} \left(e^{\gamma} \psi^{(-)}_{m-1} +  \psi^{(-)}_{m+1}\right) 
\\
&
\stackrel{\eqref{EL_energy_sign_change}}{=} -J_{0} e^{-\gamma m} \left(e^{\gamma} (-1)^{m-1} \psi^{(+)}_{m-1} +  (-1)^{m+1}\psi^{(+)}_{m+1}\right) 
\\
&
=-J_{0} e^{-\gamma m} (-1) (-1)^{m} \left(e^{\gamma} \psi^{(+)}_{m-1} + \psi^{(+)}_{m+1}\right) 
\\
&
\stackrel{\eqref{Schrodinger_EL}}{=} (-1) (-1)^{m} E^{(+)} \psi^{(+)}_{m} 
\\
&
\stackrel{\eqref{EL_energy_sign_change}}{=} E^{(-)} \psi^{(-)}_{m}
\,\,.
\end{align*}
\end{proof}
\begin{property}[Translational invariance]
\label{prop:translational}
Let $\psi^{(\Delta m=0)}_{m}$ be a solution of the eigenvalue problem Eq.~\eqref{Schrodinger_EL} corresponding to an eigenenergy
$E=E^{(\Delta m=0)}$. Then the eigenstate-eigenenergy pair 
\begin{align}
\begin{split}
&
\psi^{(\Delta m)}_{m} = \psi^{(\Delta m=0)}_{m-\Delta m}
\\
&
E=E^{(\Delta m)}= e^{-\gamma \Delta m} E^{(\Delta m=0)} 
\end{split}
\,\,.
\label{EL_translational_invariance}
\end{align}
is also a solution of Eq.~\eqref{Schrodinger_EL}. 
\end{property}
\begin{proof}
Again, substituting $\psi^{(\Delta m)}_{m}$ to Eq.~\eqref{Schrodinger_EL}, we get 
\begin{align*}
&
-J_{0} e^{-\gamma m} \left(e^{\gamma} \psi^{(\Delta m)}_{m-1} +  \psi^{(\Delta m)}_{m+1}\right) 
\\
&
\stackrel{\eqref{EL_translational_invariance}}{=} -J_{0} e^{-\gamma m} \left(e^{\gamma} \psi^{(\Delta m=0)}_{m-\Delta m-1} +   
\psi^{(\Delta m=0)}_{m-\Delta m+1}\right) 
\\
&
= -J_{0} e^{-\gamma\, \Delta m}  e^{-\gamma (m-\Delta m)} \left(e^{\gamma} \psi^{(\Delta m=0)}_{m-\Delta m-1} +   
\psi^{(\Delta m=0)}_{m-\Delta m+1}\right) 
\\
&
\stackrel{\eqref{Schrodinger_EL}}{=} e^{-\gamma\, \Delta m}  E^{(\Delta m=0)} \psi^{(\Delta m=0)}_{m-\Delta m} 
\\
&
\stackrel{\eqref{EL_translational_invariance}}{=} E^{\Delta m} \psi^{(\Delta m)}_{m}
\,\,.
\end{align*}
\end{proof}

The Property\! \ref{prop:translational} may seem paradoxical at first, as at may seem to imply, in combination with the Property\!
\ref{prop:sign_change} that there are twice as many eigenstates of the Hamiltonian as there are lattice sites. This apparent paradox is resolved in Subsection \ref{ss:dark_states}. As we show there, a model given by Eq.~\eqref{H_EL} on an infinite lattice is not physical, as it is not a limit of a problem on a ray of sites to the right of a Dirichlet boundary. 
However an infinite lattice model where the eigenstates are post-selected in such a way that only the states 
that in the limit $m\to -\infty$ develop nodes at the {\it odd} sites are kept represents a faithful 
\begin{align*}
&
m_{\text{left}} \to -\infty
\\
&
m_{\text{left}} = \text{even}
\end{align*}
limit of a ray with a Dirichlet boundary at $m_{\text{left}}$. More precisely, 
we assume that $\psi_{m_{\text{left}}-1} = 0$ at the $(m_{\text{left}}-1)^{\text{th}}$ site. This consideration leads to the following amendment to the 
Property\! \ref{prop:translational}:
\begin{align}
\Delta m = \text{even}
\,\,.
\label{EL_translational_invariance_supplement}
\end{align}
%

\subsection{Eigenfunctions in the ``classically forbidden'' region, $m\to +\infty$}
In this subsection, we will be interested in the region of space where 
\begin{equation}
|E| \gg J_{m},
\label{classically_forbidden_condition} 
\end{equation}
or
\begin{equation}
m \gg \frac{1}{\gamma} \ln\left(\frac{J_{0}}{|E|}\right)
\nonumber.
\end{equation}

If one forgets about the spatial dependence of $J_{m}$ for a moment, this area of space corresponds to a ``classically forbidden'' (CF) region where no waves can propagate. One would expect the wavefunction to develop an exponentially-decaying tail instead.   Furthermore, we will assume that the spatial variation of the hopping coefficient is slow in comparison with the spatial variation of the eigenstates. This is a domain of the parameter space where a semiclassical approximation can be used~\cite{landau_quantum___quasiclassics}. We will verify the validity of this assumption \emph{a posteriori}. 

Consider first the case where $E>0$. If the hopping coefficient were a constant, $\bar{J}$, then the exponentially decaying solution of the Schr\"{o}dinger equation Eq.~\eqref{Schrodinger_EL} would have the form
\begin{align}
\begin{split}
&
(\psi_{\text{$J_{m} = \bar{J}$}})_{m} = 
(-1)^{m-m_{\text{CF}}}
e^{-\bar{\kappa}m}
\psi_{m_{\text{CF}}} 
\end{split}
\,\,,
\label{psi_EL_CF_J_constant}
\end{align}
where $m_{\text{CF}}$ is the (left) boundary of the area of validity of the CF region approximation, 
$\psi_{m_{\text{CF}}}$ is the value of the {\it exact} eigenfunction at $m_{\text{CF}}$, and
\[\bar{\kappa} = \text{arccosh}\!\left(\frac{E}{2 \bar{J}}\right) \stackrel{|E| \gg \bar{J}}{\approx} \ln\!\left(\frac{E}{\bar{J}}\right) \,\,.  \]

To first order of the semiclassical approximation \cite{landau_quantum___quasiclassics}, the $\bar{\kappa} m$ under the exponent in \eqref{psi_EL_CF_J_constant} is modified as 
\[ \bar{\kappa} m \to \sum_{m'=m_{\text{CF}}}^{m-1} \kappa_{m'} \,\,,\]
with $\kappa_{m} \equiv \ln\!\left(\frac{E}{J_{m}}\right)$. After straightforward manipulation (see App.~\ref{ss:WKB} for details), we get the following 
$m\to +\infty$ approximation, still restricted to positive energies, $E>0$:  
\begin{align*}
\begin{split}
&
\psi_{m} \stackrel{m\to +\infty}{\approx} (\psi_{\text{CF}})_{m}
\\
&
(\psi_{\text{CF}})_{m} = 
(-1)^{m-m_{\text{CF}}}
\left(\frac{\sqrt{J_{m_{\text{CF}}} J_{m-1}}}{E}\right)^{m-m_{\text{CF}}}
\psi_{m_{\text{CF}}} 
\\
&
m = m_{\text{CF}},\,m_{\text{CF}}+1,\,m_{\text{CF}}+2,\,\ldots
\end{split}
\,\,.
\end{align*}

The case of negative energies, $E<0$, can be evaluated in a similar fashion. The only difference between the 
two signs of energies is that at the negative energies, there is no sign-alternating factor in the expression analogous to 
Eq.~\eqref{psi_EL_CF_J_constant}. 

Finally, an expression that covers both signs of energy reads 
\begin{widetext}
\begin{align}
\begin{split}
&
\psi_{m} \stackrel{m\to +\infty}{\approx} (\psi_{\text{CF}})_{m}
\\
&
(\psi_{\text{CF}})_{m} = 
\left\{
\begin{array}{ccc}
(-1)^{m-m_{\text{CF}}} & \text{for} & E>0
\\
1 & \text{for} & E<0
\end{array}
\right\}
\left(\frac{\sqrt{J_{m_{\text{CF}}} J_{m-1}}}{|E|}\right)^{m-m_{\text{CF}}}
\psi_{m_{\text{CF}}} 
\\
&
m = m_{\text{CF}},\,m_{\text{CF}}+1,\,m_{\text{CF}}+2,\,\ldots
\end{split}
\,\,.
\label{psi_EL_CF}
\end{align}
\end{widetext}

Let us finally assess the validity of the semiclassical approximation. According to Ref.~\cite{landau_quantum___quasiclassics}, the 
semiclassical approximation is valid when the de-Broglie wavelength, $\lambda_{\text{d-B}}$, does not change appreciably over a length comparable to itself: 
\[ \frac{d \lambda_{\text{d-B}}}{dm} \ll 1 \,\,.\]
In our case, 
\[\lambda_{\text{d-B}} \sim \frac{1}{\kappa_{m}} \sim  \frac{1}{\gamma (m-\tilde{m})} \,\,, \]
with $\tilde{m} = -\frac{1}{\gamma} \ln\!\left(\frac{|E|}{J_{0}}\right)$ is the location where the hopping coefficient is comparable to the magnitude of energy. The condition of validity of the approximation becomes
\begin{align}
|E| \gg e^{\sqrt{\gamma}} J_{m}
\label{semiclassics_validity_condition} 
\,\,.
\end{align}
Notice that this condition is more restrictive than the condition in Eq.~\eqref{classically_forbidden_condition} for the CF region. 

\subsection{Derivation of Eq.~\eqref{psi_EL_CF} for the eigenstate wavefunctions in the CF region \label{ss:WKB}}

In this subsection we will derive the WKB approximation for the exponential lattice which leads to Eq.~\eqref{psi_EL_CF}. We will follow Bremmer's method as described in Ref.~\cite{Berry_1972}.
Since in the CF region, the spatial dependence of the hopping coefficient is slow, we divide the domain of interest into regions, not necessarily of the same size, with an approximately constant $J_n$ in each region. The right boundary of each region is labeled by $n_j$, and the difference in magnitudes between the $J_j$ in neighboring regions is small.  

We consider the scattering problem with a step at $n_j$. The hopping coefficient to the left is $J_L$ and to the right is $J_R$. To the right of the step the wavefunction is a decaying exponential with wavenumber $\kappa_R$, and to the left $\kappa_L$. The transmission coefficient is found to be
\begin{equation*}
t_{n_j} = \frac{J_L(e^{\kappa_L} - e^{-\kappa_L})}{J_L e^{\kappa_L}-J_R e^{-\kappa_R}}e^{(\kappa_R - \kappa_L)n_j}.
\end{equation*}

To evaluate the wavefunction at some later point, we neglect reflection coefficients, which are very small in magnitude. We are free to choose the size of the regions in our domain, and more importantly their endpoints, so we may choose the endpoint of the final region, $n_N$, to be at the lattice point where we wish to evaluate the wavefunction. There, the wavefunction is approximated by
\begin{align*}
&
\psi_{n_N} = e^{-\kappa_{n_{N+1}}n_N} \times
\\
&
\qquad
\prod_{j=0}^{N-1}
\frac{J_{n_j}(e^{\kappa_{n_j}} - e^{-\kappa_{n_j}})}{J_{n_j} 
e^{\kappa_{n_j}}-J_{n_{j+1}}
e^{-\kappa_{n_{j+1}}}}e^{(\kappa_{n_{j+1}} - \kappa_{n_j})n_j}.
\end{align*}

Since the difference between neighboring $J_n$ is small,
\begin{equation*}
\frac{J_{n_j}}{J_{n_{j+1}}} = 1+\epsilon_j, \qquad \abs{\epsilon_j} \ll 1.
\end{equation*}
Then, to first order in $\epsilon_j$,
\begin{equation*}
\frac{J_{n_j}(e^{\kappa_{n_j}} - e^{-\kappa_{n_j}})}{J_{n_j} e^{\kappa_{n_j}}-J_{n_{j+1}}e^{-\kappa_{n_{j+1}}}} =  \sqrt{\frac{J_{n_j}\sinh{\kappa_{n_j}}}{J_{n_{j+1}}\sinh{\kappa_{n_{j+1}}}}},
\end{equation*}
which becomes a telescoping series when the product is taken, leaving

\begin{equation*}
\psi_{n_N} = \sqrt{\frac{J_{n_0}\sinh{\kappa_{n_0}}}{J_{n_{N-1}}\sinh{\kappa_{n_{N-1}}}}}  e^{-\kappa_{n_{N+1}}n_N}\prod_{j=0}^{N-1} e^{(\kappa_{n_{j+1}} - \kappa_{n_j})n_j}.
\end{equation*}
The remaining exponential terms can be rewritten in terms of a sum, so that
\begin{equation*}
\psi_{n_N} = \sqrt{\frac{J_{n_0}\sinh{\kappa_{n_0}}}{J_{n_{N-1}}\sinh{\kappa_{n_{N-1}}}}} e^{-n_0\kappa_0+\sum_{j=1}^{N-1} (n_{j-1}-n_j)\kappa_j}.
\end{equation*}
We can now restore the $\kappa_j$ associated with particular lattice points, so the WKB approximation for the wavefunction at a particular lattice point $m$ becomes
\begin{equation*}
\psi_m = \sqrt{\frac{J_{m_0}\sinh{\kappa_{m_0}}}{J_{m-1}\sinh{\kappa_{m-1}}}} e^{-m_0\kappa_0-\sum_{j=1}^{m-1}\kappa_j}.
\end{equation*}

Next we invoke Eq.~\eqref{classically_forbidden_condition}, the condition that $\abs{E}\gg J_m$. We consider propagation only within the classically forbidden region, so $m_0=m_{cf}$, and the summation in the exponent is now
\begin{equation*}
\sum_{j=m_{CF}}^{m-1} \kappa_j.
\end{equation*}
To first order in $J_m/E$, this reduces the expression for the wavefunction to
\begin{equation*}
\qty(\psi_{CF})_m = e^{-m_{CF}\kappa_{CF}}\qty(\frac{J_0}{\abs{E}}e^{-\gamma (m_{CF}+m-1)/2})^{m-m_{CF}},
\end{equation*}
which can be rewritten using the definition of $J_m$ to read
\begin{equation}
\qty(\psi_{CF})_m  = \qty(\frac{\sqrt{J_{m_{CF}}J_{m-1}}}{\abs{E}})^{m-m_{CF}}\psi_{m_{CF}}.
\end{equation}
 
\subsection{Numerically exact eigenstates and discussion}
Figures \ref{f:exponential_lattice__spectrum} and \ref{f:exponential_lattice__eigenstates} show results numerical results of a 
numerical diagonalization of the Hamiltonian in Eq.~\eqref{H_EL} on a lattice with $201$ sites $m=0,\,\pm1,\,\pm 2,\,\ldots,\,\pm 100$ with Dirichlet boundary conditions. The parity of the coordinate of the leftmost site (i.e., the fact that $-100$ is an even number) effectuates the post-selection condition that $m_{\text{DS}} - m_{\text{left}}$ must be even. This can be reformulated as follows: our numerical procedure preserves only half of the eigenstates of the infinite lattice, namely those whose nodes in the $m\to -\infty$ region are located on the {\it odd} lattice sites.

We considered two values of the coupling decay rate $\gamma$: $\gamma = 0.3$ and $\gamma = \ln(3) \approx 1.10$. In the first case, coupling decay is slow enough to ensure the validity of the approximation in Eq.~\eqref{psi_EL_CF}. The second case corresponds to a lattice that is relevant to the problem in the main text. Fig \ref{f:exponential_lattice__spectrum} shows energy spectra in both the $\gamma = 0.3$ and $\gamma = \ln(3)$ cases. We noticed that in both cases, values $E= +J_{0}$ and $E= -J_{0}$ belong to the spectrum. We were not able to support this observation with an analytic result.

Next, we used the states $E= +J_{0}$ and $E= -J_{0}$ as a reference to verify the predictions in Eqs.~\eqref{EL_translational_invariance}-\eqref{EL_translational_invariance_supplement}. In both cases, the agreement between the numerical results and Eqs.~\eqref{EL_translational_invariance}-\eqref{EL_translational_invariance_supplement} is remarkable. We also confirmed the prediction that the spectrum is symmetric with respect to the $+E \leftrightarrow -E$ transformation in Eq.~\eqref{EL_energy_sign_change}. Figure~\ref{f:exponential_lattice__eigenstates} shows the  $E= +J_{0}$ and $E= -J_{0}$ eigenstates for both values of $\gamma$ considered here. Both the DS (Eq.~\eqref{psi_EL_DS}) and the CF region (Eq.~\eqref{psi_EL_CF}) approximations work remarkably well. The validity of the approximation in Eq.~\eqref{psi_EL_CF} for the 
$\gamma = \ln(3)$ case was unexpected, given the validity condition in Eq.~\eqref{semiclassics_validity_condition}.  
\begin{figure}[!h]
\includegraphics[width=.4\textwidth]{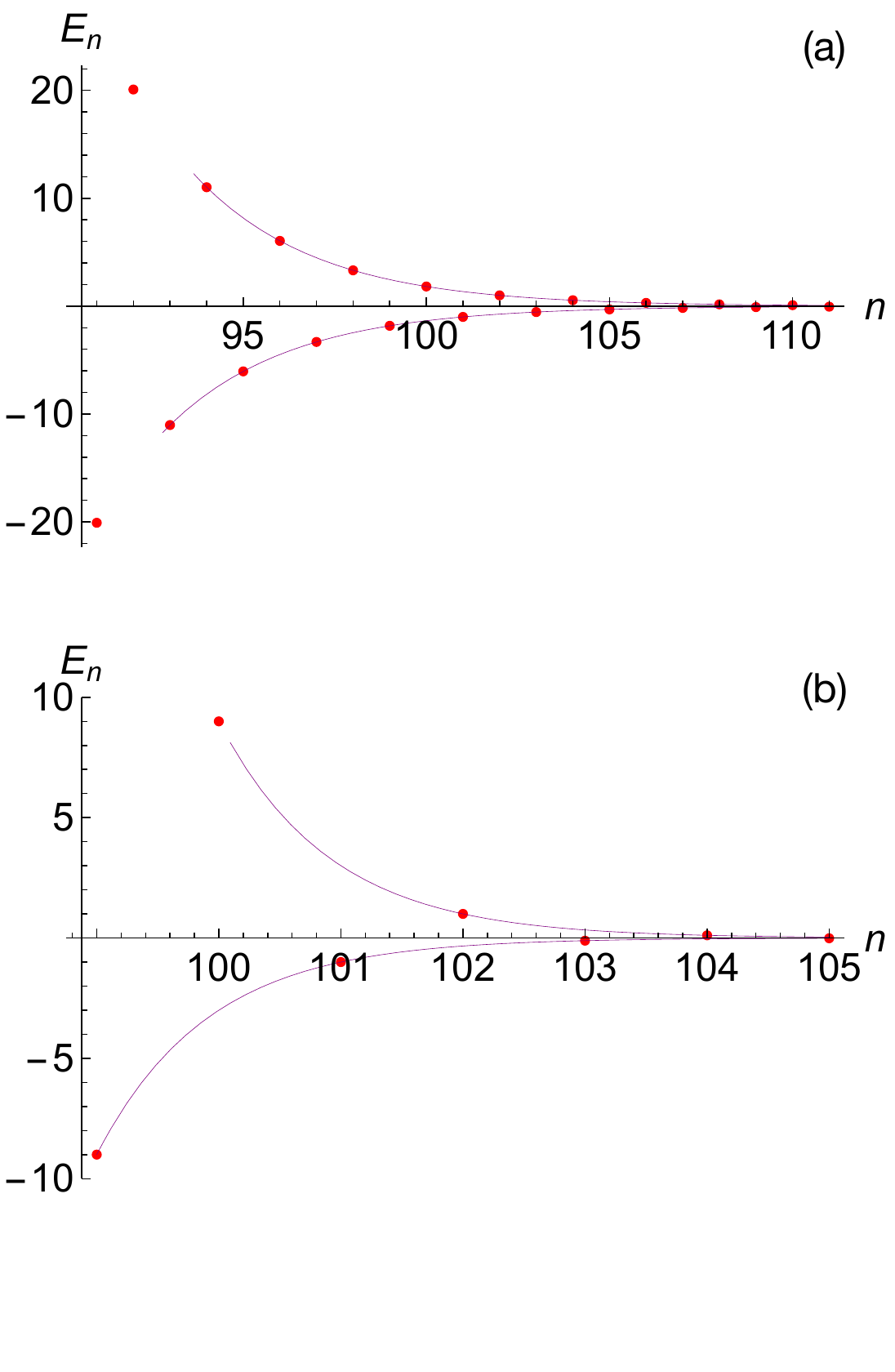}
\caption{Exponential lattice: energy spectra. The coupling constant decay rate is given by (a) $\gamma = 0.3$ and (b) $\gamma = \ln(3)$ 
respectively. Spectra are ordered in the descending order of the magnitude of energy, and energy is measured in the units of $J_{0}$. The red dots show the numerical prediction, and the solid thin lines represent the predictions in Eqs.~\eqref{EL_translational_invariance}-\eqref{EL_translational_invariance_supplement}. Notice also that for any magnitude of eigenenergy, both signs of the eigenenergy are present in the spectrum, in agreement with Eq.~\eqref{EL_energy_sign_change}. Finally, the presence of the $E=\pm J_{0}$ levels for both values of $\gamma$ remains unexplained.}
\label{f:exponential_lattice__spectrum}

\end{figure}
\begin{figure*}

\includegraphics[width=.9\textwidth]{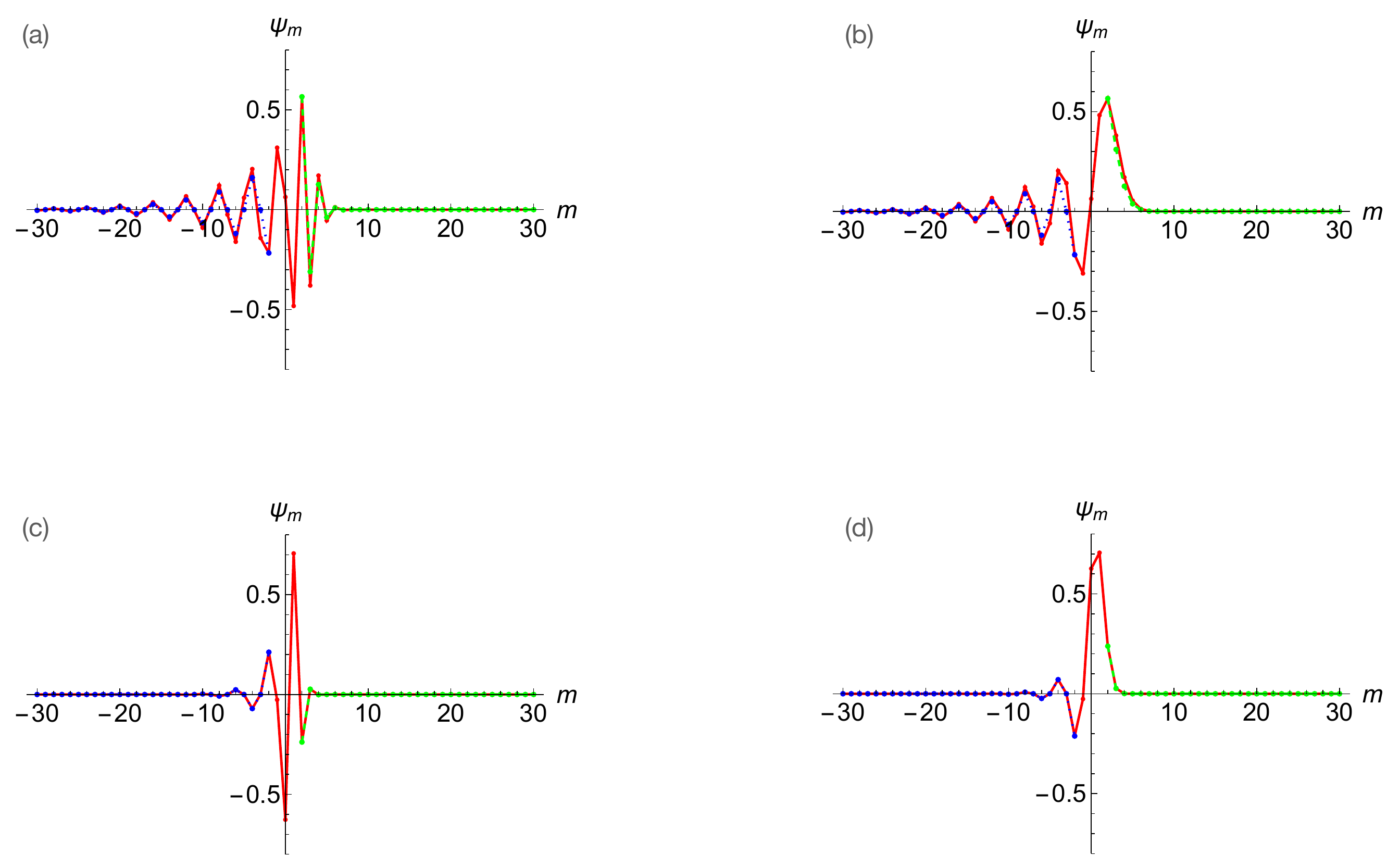}

\caption{Exponential lattice eigenstates. 
The $E=+J_{0}$ ((a) and (c)) and $E=-J_{0}$ ((b) and (d)) eigenstates of the $\gamma = 0.3$  ((a) and (b)) and 
$\gamma = \ln(3)$  ((c) and (d)) lattices. The rest of the eigenstates are translations of the ones presented by an even number of sites. Red circles and red solid lines show numerical results. Blue circles and blue dotted lines correspond to the DS approximation in Eq.~\eqref{psi_EL_DS}. Green circles and green dashed lines correspond to the CF region approximation in Eq.~\eqref{psi_EL_CF}.   
}
\label{f:exponential_lattice__eigenstates}
\end{figure*}
%


\section{A conjecture that there exists an infinite number of even numbers whose Goldbach decomposition does not involve lower twin primes \label{s:twin_primes}}
In this section, we conjecture that there is an infinite number of even numbers $w$ that feature the No Lower Twin (NLT) property, i.e., that for any of the Goldbach decompositions of $w$, $w=p_{1}+p_{2}$, both $p_{1}+2$ and $p_{2}+2$ are composite numbers. 
Note that unlike in the main text, here we do \emph{not} assume that $p_{1}$ and $p_{2}$ are in any particular order with respect to each other. 

In preparation, let us reiterate a well-known result that there do not exist triplets of consecutive (separated by $2$) primes unless the lower member of the triplet is $3$. We will need, however, a stronger version of this statement expressed in the following Lemma.
\begin{lemma}
Any lower twin prime $p$ greater than $3$ the following is true:
\[p \equiv 2 \,\,\, (\text{\rm mod}\,\, 3)\]
\label{p_mod_2}
\end{lemma}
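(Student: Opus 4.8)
The plan is to prove the statement by a short residue analysis modulo $3$. Since $p$ is prime with $p>3$, it cannot be divisible by $3$, so only the residue classes $p\equiv 1\pmod 3$ and $p\equiv 2\pmod 3$ are available; the entire task reduces to excluding the first of these.

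First I would treat the case $p\equiv 1\pmod 3$. Here $p+2\equiv 1+2\equiv 0\pmod 3$, so $3\mid(p+2)$. By the lower-twin hypothesis $p+2$ is itself prime, and since $p>3$ gives $p\ge 5$ and hence $p+2\ge 7$, the number $p+2$ is a prime strictly larger than $3$ that is divisible by $3$ — a contradiction. This rules out $p\equiv 1\pmod 3$.

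Next, combining this with the fact that $p\equiv 0\pmod 3$ is already excluded by primality (together with $p>3$), the only surviving residue is $p\equiv 2\pmod 3$, which is exactly the claim. As a consistency check one notes that $p\equiv 2\pmod 3$ yields $p+2\equiv 1\pmod 3$, placing no further obstruction on $p+2$ being prime.

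There is essentially no hard step in this argument: it is elementary modular arithmetic, and the case analysis is exhaustive over the three residues mod $3$. The only point that warrants a moment of care is the boundary, namely that the divisibility $3\mid(p+2)$ genuinely forces a contradiction rather than collapsing to the degenerate possibility $p+2=3$; this is handled automatically by the hypothesis $p>3$, which guarantees $p+2\ge 7>3$.
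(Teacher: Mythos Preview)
Your proof is correct and follows essentially the same residue analysis modulo $3$ as the paper: both arguments use that $p$ prime and $p\neq 3$ forces $p\equiv 1$ or $2\pmod 3$, and that $p+2$ prime and $p+2\neq 3$ then eliminates $p\equiv 1$. The only cosmetic difference is that the paper intersects the two allowed residue sets directly, whereas you phrase the elimination of $p\equiv 1$ as a contradiction; your explicit handling of the boundary $p+2>3$ is a nice touch.
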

\begin{proof}
First of all, since $p$ is a prime different from $3$, 
\[p \equiv 1 \text{\rm \,\,or } 2 \,\,\, (\text{\rm mod}\,\, 3)\,\,.\]
Secondly, since $p+2$ is also a prime different from $3$,
\[p \equiv 2 \text{\rm \,\,or } 0 \,\,\, (\text{\rm mod}\,\, 3)\,\,.\]
Combining the two we get, 
\[p \equiv 2 \,\,\, (\text{\rm mod}\,\, 3)\,\,.\]
\end{proof}

We will also need to assert that
\begin{lemma}
All the even values of $w$ such that 
\begin{align*}
&
\text{\rm (a)  } w-3 \neq \text{\rm prime}
\\
&
\text{\rm (b)  } w\equiv 2\,\,\, (\text{\rm mod}\,\, 6)
\end{align*}
possess the NLT property. 
\label{benz_s_lemma}
\end{lemma}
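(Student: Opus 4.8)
The plan is to prove the statement directly: assuming $w$ satisfies hypotheses (a) and (b), I will show that no prime occurring as a summand in any Goldbach decomposition of $w$ can be a lower twin prime, which is precisely the NLT property. First I would fix an arbitrary decomposition $w = p + q$ with $p,q$ prime and suppose, toward a contradiction, that one of the summands --- say $p$, which the unordered convention adopted here permits me to single out --- is a lower twin prime. Note that this already excludes $p=2$ (since $2+2=4$ is composite), so every lower twin prime under consideration satisfies $p \ge 3$.

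The argument then splits on the size of $p$. In the case $p = 3$ (itself a lower twin, as $5$ is prime) the complementary summand is $q = w - 3$; but hypothesis (a) asserts that $w-3$ is not prime, contradicting the primality of $q$. Hence I may assume $p > 3$. Here I would invoke Lemma~\ref{p_mod_2}, which gives $p \equiv 2 \pmod{3}$, together with hypothesis (b): since $w \equiv 2 \pmod{6}$ forces $w \equiv 2 \pmod{3}$, the complement satisfies $q = w - p \equiv 2 - 2 \equiv 0 \pmod{3}$, so $3 \mid q$. A prime divisible by $3$ must equal $3$, whence $q = 3$ and therefore $p = w - 3$. Applying (a) once more, $w - 3$ is not prime, contradicting the primality of $p$.

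Since both cases are impossible, no summand of any Goldbach decomposition of $w$ is a lower twin prime; equivalently, both $p+2$ and $q+2$ are composite for every such decomposition, which establishes that $w$ possesses the NLT property.

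I do not anticipate a serious obstacle: the whole argument is a short modular-arithmetic reduction once Lemma~\ref{p_mod_2} is available. The one point that demands care is the bookkeeping around the prime $3$. One must treat $p = 3$ separately, where condition (a) does the work immediately, and then recognize that in the $p > 3$ branch the divisibility $3 \mid q$ again funnels the decomposition into the forbidden shape $w = p + 3$, so that condition (a) is exactly what closes both branches. It is also worth verifying explicitly that hypothesis (b) is used only through its consequence $w \equiv 2 \pmod{3}$, and that the unordered convention legitimately reduces the statement ``neither summand is a lower twin'' to ruling out a single designated summand.
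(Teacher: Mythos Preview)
Your proposal is correct and follows essentially the same route as the paper: split on whether the putative lower-twin summand equals $3$, dispose of that case via hypothesis~(a), and in the remaining case invoke Lemma~\ref{p_mod_2} together with $w\equiv 2\pmod 3$ to force the other summand to be divisible by $3$. The only cosmetic difference is that the paper first establishes $p_{1},p_{2}\neq 3$ and then reads the congruence $p_{2}\equiv 0\pmod 3$ as an immediate contradiction, whereas you let $q=3$ and reapply~(a); the underlying argument is the same.
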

For example:
\begin{align*}
&
38 = 6\times 6 + 2 = 2 \mod 6
\\
&
38-3=35 = 5\times 7 \neq \text{prime}
\\
&
38 = 7 + 31 = 19 + 19
\\
&
7+2=9=3\times 3 \neq \text{prime}
\\
&
31+2 = 33 = 3\times 11 \neq \text{prime}
\\
&
19+2 = 21 = 3\times 7 \neq \text{prime}
\end{align*}
\begin{proof}
Consider a Goldbach partition of $w$, 
\[w=p_{1}+p_{2}\,\,.\]

From the premise (a) it follows that 
\begin{align}
\begin{split}
&
p_{1} \neq 3
\\
&
p_{2} \neq 3
\end{split}
\label{neq_3}
\,\,.
\end{align}
This can be proven using  {\it reductio ad absurdum}. Assume that $p_{2}=3$. Then $p_{1}=w-p_{2} = (w-3)+(3-p_{2})$ is a composite number, in contradiction to the premise that $p_{1}$ was a prime number. The assertion that $p_{1} \neq 3$ can be proven analogously.

Next, it will follow from the premise (b) that  
\begin{align}
w\equiv 2\,\,\, (\text{\rm mod}\,\, 3)
\label{w_mod_3}
\end{align}

We will utilize the statement in Eq.~\eqref{w_mod_3}, Lemma~\ref{p_mod_2} to build a proof {\it ad absurdum}. Assume that $p_{1}$ is a lower twin different from $3$. Then
\begin{align*}
p_{2} = w - p_{1} 
\equiv 2-2  \,\,\, (\text{\rm mod}\,\, 3) 
\equiv 0  \,\,\, (\text{\rm mod}\,\, 3)
\end{align*}
However, this statement contradicts the premise that $p_{2}$ is prime different from $3$. 
\end{proof}

We are finally ready to the principal theorem of this section:
\begin{theorem}
\label{th:non-twin}
The number of even numbers possessing the NLT property is infinite.
\end{theorem}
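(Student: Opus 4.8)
The plan is to reduce Theorem~\ref{th:non-twin} to Lemma~\ref{benz_s_lemma}. Since that lemma already guarantees the NLT property for \emph{every} even $w$ satisfying its two hypotheses --- (a) $w-3$ not prime and (b) $w\equiv 2\,(\mathrm{mod}\,6)$ --- it suffices to exhibit an explicit infinite family of even numbers meeting both conditions simultaneously; each member of that family then automatically possesses the NLT property, and infinitude of the family gives the theorem. Producing such a family is elementary, so I expect no genuine obstacle. The only care needed is to guarantee that the integer playing the role of $w-3$ is honestly composite rather than accidentally equal to $1$ or prime.

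First I would fix the residue class dictated by (b). Writing $w=6k+2$ automatically enforces both $w\equiv 2\,(\mathrm{mod}\,6)$ and the evenness of $w$, and then $w-3=6k-1\equiv 5\,(\mathrm{mod}\,6)$. The whole problem thus collapses to finding infinitely many $k$ for which $6k-1$ is composite. Next I would manufacture these $k$ by forcing a fixed small prime divisor into $6k-1$: solving $6k-1\equiv 0\,(\mathrm{mod}\,5)$ gives $k\equiv 1\,(\mathrm{mod}\,5)$, i.e.\ $k=5j+1$, whence $w-3=6(5j+1)-1=30j+5=5(6j+1)$, a product of two integers each exceeding $1$ for every $j\ge 1$, hence composite. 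The corresponding value $w=30j+8$ with $j=1,2,3,\ldots$ is even, satisfies $w\equiv 2\,(\mathrm{mod}\,6)$, and has $w-3$ composite, so by Lemma~\ref{benz_s_lemma} each such $w$ has the NLT property. The first member, $w=38$, reproduces the worked example preceding the lemma, which is a useful consistency check. This is an infinite family of even numbers with the NLT property, which is exactly the claim.

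As an alternative I could bypass the explicit construction and argue by density: by the prime number theorem for arithmetic progressions the number of primes up to $x$ in the class $5\,(\mathrm{mod}\,6)$ is $O(x/\log x)$, whereas that class contains $\sim x/6$ integers, so the composites in it are infinite, and each yields a valid $w=(w-3)+3$. I would prefer the constructive route, however, since it is self-contained and stays within the elementary number theory of Lemmas~\ref{p_mod_2} and~\ref{benz_s_lemma} rather than invoking analytic machinery. The one minor subtlety worth flagging is the restriction $j\ge 1$, which discards the case $j=0$, i.e.\ $w=8$ --- precisely the even number for which $w-3=5$ is prime and whose decomposition $8=3+5$ does involve the lower twin $3$, so that (a) genuinely fails there.
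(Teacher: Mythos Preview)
Your proof is correct and essentially identical to the paper's: the family $w=30j+8$ you construct is precisely the sequence $w_k=2(15k+4)$ the authors use, with the same verification that $w-3=5(6j+1)$ is composite and $w\equiv 2\pmod 6$, and the same appeal to Lemma~\ref{benz_s_lemma}. Your remark excluding $j=0$ and the sanity check $w=38$ also match the paper's treatment.
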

\begin{proof}
Consider a sequence 
\begin{align}
\begin{split}
&
w_{k} \equiv 2(15k+4)
\\
&
k=1,\,2,\,3,\,4,\,\ldots
\end{split}
\,\,.
\label{provable_sequence}
\end{align}
Let us first show that every member of this sequence obeys the premise (a) of the Lemma~\ref{benz_s_lemma}. Indeed, \[2(15k+4)-3 = 5(6k+1)\,\,;\]
the right hand side is a manifestly composite number unless $k=0$
For the premise (b), we get \[2(15k+4) = 6k' + 2\,\,,\]
with $k'=5k+1$.
\end{proof}

A comment is in order. The members of the infinite sequence \eqref{provable_sequence} used to prove the Theorem~\ref{th:non-twin}, may turn out to be a small subset of the full set of evens possessing the NLT property. Let us attempt to make an estimate of the density of evens in the former set. According to the Prime Number Theorem, the number of primes below a number $N$ is, approximately, $\pi(N)\approx N/\ln(N)$~\cite{dudley_Number_Theory1970}.  It will immediately follow that for a given prime $p_{j}$, the next prime $p_{j+1}$ will be separated from $p_{j}$ by a gap---filled by the composite numbers---whose size is approximately $\ln(N)$. (Here and below, $p_{1},\,p_{2},\,p_{3},\,\ldots = 2,\,3,\,5$ is the contiguous sequence of prime numbers in ascending order.) It will also follow that $p_{j} \approx j \ln(j)$.

The probability that a given number $N$ obeys the premise (b) of Lemma~\ref{benz_s_lemma} is $1/6$. The probability that a number of this type obeys the premise (a) 
is $1-2/\ln(N)$. Here $1/\ln(N)$ is the inverse of the gap between consecutive primes; the factor of $2$ accounts for the fact that premise (b) implies that $N$ is even and hence that $N-3$ is odd. Now, according to Lemma~\ref{benz_s_lemma}, the probability that a given number $N$ is an even number that possesses the NLT property is $(1/6)(1-2/\ln(N)) \stackrel{N\gg 1}{\approx} 1/6$. All in all, we estimate that number of the even numbers obeying the NLT property that are less than $N$ is 
\begin{align}
\pi_{\text{NLT, estimated}}(N) \stackrel{N\gg 1}{\approx} \frac{N}{6} \,\,. 
\label{NLT_estimate}
\end{align}

Interestingly, the rigorously justified subsequence \eqref{provable_sequence} properly captures the uniformity of the distribution estimated by Eq.~\eqref{NLT_estimate} but underestimates the density of the NLT even numbers by a factor of $5$:
\begin{align}
\pi_{\text{NLT, proven}}(N) \geq  \frac{N}{30} - \frac{19}{15} \,\,. 
\label{NLT_lower_bound}
\end{align}
%

\bibliography{Bethe_ansatz_v048,Nonlinear_PDEs_and_SUSY_v050,UoB_bib,WKBRef,BensRefs}
\end{document}